\newtheorem{proposition}{Proposition}
\newtheorem*{proof*}{Proof}
\newtheorem{remark}{Remark}	
\newcommand{\norm}[1]{\left\lVert#1\right\rVert}
\NewDocumentCommand{\INTERVALINNARDS}{ m m }{
	#1 {,} #2
}
\NewDocumentCommand{\interval}{ s m >{\SplitArgument{1}{,}}m m o }{
	\IfBooleanTF{#1}{
		\left#2 \INTERVALINNARDS #3 \right#4
	}{
		\IfValueTF{#5}{
			#5{#2} \INTERVALINNARDS #3 #5{#4}
		}{
			#2 \INTERVALINNARDS #3 #4
		}
	}
}
\let\origIEEEPARstart\IEEEPARstart
\renewcommand{\IEEEPARstart}[3][1.1]{%
	\def\@IEEEPARstartDROPDEPTH{#1\baselineskip}%
	\origIEEEPARstart{#2}{#3}%
}
\begin{document}
	\title{Data-Driven Adaptive Network Slicing for Multi-Tenant Networks}
	
	\author{Navid~Reyhanian and Zhi-Quan Luo, \textit{Fellow, IEEE}
		\thanks{This paper was accepted to be presented in part at IEEE International Conference on Acoustics, Speech and Signal Processing (ICASSP),  June 6–11, 2021 \cite{Reyhanian2021icassp}.}
		\thanks{N. Reyhanian is with the Department
			of Electrical and Computer Engineering, University of Minnesota, Minneapolis,
			MN, 55455 USA (e-mail: navid@umn.edu).}
		\thanks{Z.-Q. Luo is with Shenzhen Research Institute of Big Data, and The Chinese University of Hong Kong, Shenzhen, China (e-mail: luozq@cuhk.edu.cn).}}
	

	\maketitle
	
	\begin{abstract}
		Network slicing to support multi-tenancy plays a key role in improving the performance of 5G networks. In this paper, we propose a two time-scale framework for the reservation-based network slicing in the backhaul and Radio Access Network (RAN). In the proposed two time-scale scheme, a subset of network slices is activated via a novel sparse optimization framework in the long time-scale with the goal of maximizing the expected utilities of tenants while in the short time-scale the activated slices are reconfigured according to the time-varying user traffic and channel states. Specifically, using the statistics from users and channels and also considering the expected utility from serving users of a slice and the reconfiguration cost, we formulate a sparse optimization problem to update the configuration of a slice resources such that the maximum isolation of reserved resources is enforced. The formulated optimization problems for long and short time-scales are non-convex and difficult to solve. We use the $\ell_q$-norm, $0<q<1$, and group LASSO regularizations to iteratively find convex approximations of the optimization problems. We propose a Frank-Wolfe algorithm to iteratively solve approximated problems in long time-scales. To cope with the dynamical nature of traffic variations, we propose a fast, distributed algorithm to solve the approximated optimization problems in short time-scales. Simulation results demonstrate the performance of our approaches relative to optimal solutions and the existing state of the art method.

	\end{abstract}

	\begin{IEEEkeywords}
		Network slicing, multi-tenant network, utility maximization, group LASSO, upper-bound minimization.
	\end{IEEEkeywords}

	\IEEEpeerreviewmaketitle

	\section{Introduction}\label{sec:intro}
	Recently, significant attention has been placed on network slicing as a key element for enabling flexibility and programmability in 5G mobile networks \cite{alliance2016description,8744260}. The network \textit{slice} is a logical, virtualized end-to-end network provided to each tenant to support the demands of users. Each slice consists of link capacities in the backhaul and transmission resources, e.g., bandwidth, in the Radio Access Network (RAN), while each \textit{tenant} may own multiple network slices and use their reserved resources to serve users. 
	Network slices are dynamically activated, reconfigured and deactivated by the control center of the infrastructure provider. 
	
	
	The life-cycle management of network slices includes the design and creation phase, orchestration and activation phase, and the optimization and reconfiguration phase \cite{8004166}. In the design and creation phase, network slices are conceptually constructed based on user demands. In the orchestration and activation phase, network slices are installed on the shared physical infrastructure and user traffic starts flowing through the slice. Finally, in the optimization and reconfiguration phase, the performance of network slices is monitored, and based on traffic variations, network slices are reconfigured to maintain the Quality of Service (QoS) requirements of users. 
	
	The requirements of network slicing include scalability, flexibility, isolation, and efficient end-to-end orchestration \cite{khan2020network}.
	The scalability enables the network to efficiently adapt to provide a wide variety of applications. The flexibility allows slice reconfiguration to improve the concert of resources across different parts of the network to provide a particular service \cite{rost2017network}. In spite of the many benefits, the slice reconfiguration incurs costs and service interruptions. Therefore, slices are reconfigured only if the new configurations significantly improve network performance \cite{wang2019reconfiguration}. The isolation of reserved resources for different slices over shared physical infrastructure enables independent management of network slices and ensures that a rapid increase in the number of users, a slice failure or a security attack on one slice do not affect other slices \cite{li2017network}. The end-to-end orchestration helps to coordinate multiple system facets to maintain the QoS requirements of users. Although network slicing in the backhaul \cite{8454739,baumgartner2017optimisation,paris2016controlling} and in RAN \cite{7982693,8642526} are traditionally studied separately, only the end-to-end orchestration can ensure a robust resource allocation and reliable network performance. 
	
	To fully leverage the benefits of network slicing in 5G, it is necessary to dynamically reconfigure slices and allocate resources in a flexible data-driven manner. The data-driven methods adapt slices to traffic variations and channel states \cite{pozza2020reconfiguring}. Data-driven adaptive network slicing enables efficient and flexible allocation of resources to better support the QoS requirements of users. The QoS attributes include minimum data rate, maximum rate-loss, reliability, and security. Depending on the  QoS demands, users can be assigned to different slices. 
	
	\subsection{Related Work}
	The combinatorial problem of network slice activation is studied in \cite{8698758}, where the incoming traffic is supposed to follow a Poisson distribution. In \cite{8698758}, a network slice is kept  active based on the gained utility from serving the most recent demand without considering the predictions for the future traffic. The activation cost, which is neither convex nor continuous, is not considered. A heuristic approach for network slice activation is proposed in \cite{9165512}, which only considers the real-time status of network slices. In practice, due to the large cost of a slice activation, network slices are operated based on long-term future traffic rather than the instantaneous demand. No existing paper in the literature considers the mathematical problem of slice activation in an end-to-end network for unseen user demands.
	
	Caballero \textit{et al.} studied the allocation of network transmission resources among several tenants in \cite{7982693}, where transmission rates to users are maximized based on a weighted proportional fairness. The number of users served by each tenant is assumed to be a random number in \cite{7982693}. An optimal dynamic resource allocation problem is formulated and shown to be NP-hard.  A greedy approach is proposed in \cite{7982693} to solve the problem without a proof of convergence. To find the resource allocations for different slices of a network, a game-theoretic approach which maximizes an $\alpha-$fairness is proposed in \cite{8642526}. Static and dynamic pricing frameworks are proposed in \cite{8642526} to allocate base station transmission resources to slices, where each slice has a non-cooperative, strategic behavior. 
	The drawbacks of  \cite{8642526} include neglecting traffic statistics and failing to address the joint resource allocation in the backhaul and RAN. An adaptive forecasting and bandwidth allocation for cyclic demands in service-oriented networks is proposed in \cite{oliveira2021adaptive} without accounting for the reconfiguration cost. The user-slice association problem to find the best  slice to support requirements of each user is studied in our previous work \cite{Reyhanian2}.
	
	An optimization framework for flexible inter-tenant resource
	sharing with transmission power control is proposed in \cite{8717730} to improve network capacity and the utilization of base station resources, where interference levels are controllable. However, achievable rates of channels are assumed to be deterministic, which is an impractical assumption as the achievable rates are random in wireless channels \cite{1210731,musavian2015effective,4411539}. Network slicing algorithms for slice recovery and reconfiguration 
	under stochastic demands in service-oriented networks are studied in \cite{8454739}. Wang \textit{et al.} use $\ell_1$-norm  to promote sparsity in slice reconfigurations in \cite{wang2019reconfiguration}. To tackle the non-differentiability of $\ell_1$-norm, affine constraints that allow limited slice variations are considered instead of $\ell_1$-norm in \cite{wang2019reconfiguration}. The statistics of the demand are not used in \cite{wang2019reconfiguration} for network slicing.  In \cite{pozza2020reconfiguring}, Pozza \textit{et al.} propose a heuristic divide-and-conquer approach for finding a sequence of feasible
	solutions to reconfigure a slice made of a chain of network functions under bandwidth and latency constraints, while RAN is not considered in the formulations. 
	To tackle the high computational complexity caused by the large number of  variables in the slice reconfiguration problem, a heuristic depth-first-search algorithm is proposed in \cite{wei2020network} to find a set of possible reconfigurations, and a reinforcement learning approach is used to explore the multi-dimensional discrete action space. An end-to-end network slicing method for 5G networks without imposing constraints on available resources is proposed in \cite{yang2020data}, where the slice status is monitored and necessarily reconfigured in an online fashion via a heuristic approach. However, reconfiguring end-to-end slices based on instantaneous demands significantly increases the reconfiguration costs.
	
	Unlike most papers, e.g., \cite{8642526,7982693,8717730}, that slice the network based on traffic variations in single time-scale, a few recent papers, e.g., \cite{zhang2019deep,9057480,zhang2018joint}, propose two-time scale frameworks to improve network management and efficiency of resource allocations. In \cite{9057480}, a two-time-scale resource management
	scheme for network slicing in cloud RAN is proposed. Zhang \textit{et al.} propose a long time-scale inter-slice resource reservation for slices and a short time-scale intra-slice resource allocation in \cite{9057480}. However, the formulation of \cite{9057480} does not consider the slice reconfiguration cost, which is non-continuous and non-convex \cite{wang2019reconfiguration}, nor the resource reservation in the backhaul. Thus, \cite{9057480} fails to provide end-to-end QoS guarantees to users. A multi-time-scale decentralized online orchestration of software-defined networks is studied in \cite{8468185}, where a set of network controllers are activated based on the temporal and spatial variations in traffic requests. The slice activation and reconfiguration in a two time-scale framework are not studied in the existing literature; these topics comprise the main focus of this work.

	\subsection{Our Contributions}
	In this paper, we propose a two-time-scale resource management scheme for end-to-end reservation-based network slicing in the backhaul and RAN. In both time-scales, the expected utilities of network tenants from serving users are maximized through two different mechanisms. In the long time-scale, each tenant decides whether or not to activate a slice to serve users, while in the short time-scale, the tenant reconfigures active slices to make them adaptive to demands of users and channel states. Resource management in both time-scales is implemented under two major assumptions: the user traffic and channel states vary over time and they are uncertain. 
	
	In the long time-scale, we design a slice utility function for each tenant based on the expected acquired revenue from users, the expected outage of downlinks, and the cost of slice activation. We formulate a sparse mixed-binary optimization problem to activate network slices if the expected utility of a tenant significantly improves after the activation. We use the $\ell_q$, $0<q<1$, regularization to tackle the non-convexity and non-continuity of the slice activation cost and also to promote binary solutions in the relaxed problem. We propose a Frank-Wolfe algorithm to successively minimize convex approximations of the original problem and jointly implement the slice configuration in the backhaul and RAN. Via numerical tests, we demonstrate that the proposed method obtains solutions that are near to the optimal ones. To the best of our knowledge, this is the first endeavor to mathematically study the sparse slice activation problem for unseen user traffic based on derived statistics.
	
	For network slicing in the short time-scale, we design a slice utility function for each tenant based on the acquired revenue from users, the QoS that the slice guarantees for its users, and the cost of slice reconfiguration. We formulate a sparse optimization problem to adaptively reconfigure network slices if the expected utility of a tenant significantly changes after the reconfiguration. We use the group Least Absolute Shrinkage and Selection Operator (LASSO) regularization to tackle the non-convexity and non-continuity of the slice reconfiguration cost. We propose an Alternating Direction Method of Multipliers (ADMM) algorithm to solve each (non-convex) group LASSO subproblem in the short time-scale. The proposed ADMM algorithm implements the slice reconfiguration in the backhaul through link capacity reservation via a fast, distributed algorithm that successively minimizes a convex approximation of the objective function and parallelizes computations across backhaul links. Furthermore, the proposed ADMM algorithm implements the slice reconfiguration in RAN through the transmission resource reservation for slices using 1) a proximal gradient descent method that decomposes the problem across slices; and 2) a bisection search method. We prove that the proposed ADMM algorithm converges to the global solution of each group LASSO subproblem despite its non-convexity. Extensive numerical simulations verify that the proposed approach outperforms the existing state of the art method.
	
	The rest of this paper is organized as follows. The system model is given in Section \ref{sec:model}. In Section \ref{sec:adap}, we formulate the optimization for slice activation in long time-scales and the optimization for slice reconfiguration in short time-scales. In Sections \ref{sec:slact} and \ref{sec:slrec}, we propose approaches to solve problems in long and short time-scales, respectively. The simulation results are given in Section \ref{sec:sim}, and concluding remarks are given in Section \ref{sec:con}.

	\section{System Model and Notations}\label{sec:model}
	Consider a typical scenario whereby user data is transmitted via backhaul network links from data centers to multiple geographically separated Access Points (APs) in RAN. Multiple APs jointly transmit the requested data to each user in a coordinated multi-point mode. We denote the set of mobile users by $\mathcal{K}=\{1,\dots,K\}$ and represent the set of APs in RAN  by $\mathcal{B}$. Furthermore, let $\mathcal{L}$ represent the set of backhaul links. The downlinks between APs and users are predetermined according to interference, path loss, shadowing and fading. A \textit{path} connects a data center and an AP through a sequence of wired links in the backhaul and then goes through one downlink to reach the end user. Since each user is served by multiple APs, several candidate paths are considered to connect APs to a data center. These paths share an identical origin (the data center) and destination (the end user). We denote a path by $p$ and represent the set of paths selected to carry user $k$ data by $\mathcal{P}_k$. We assume that a single commodity is requested by a user, and therefore, there are $K$ datastreams in the backhaul network. The proposed framework can be easily extended to a scenario in which each user demands multiple commodities.

	Suppose that a network tenant is denoted by $j$ where $\mathcal{J}=\{1,\dots,J\}$ is the set of all tenants.  Each tenant owns several slices, which differ in supported features and network function
	optimization. Multiple slices, which deliver similar features, can also be deployed by each tenant. However, they are
	responsible for serving different groups of users. A network slice is represented by $s$ and the set of slices possessed by tenant $j$ is denoted by $\mathcal{S}_j$. The set of users served by slice $s$ is denoted by $\mathcal{K}_s$ and the set of users served by tenant $j$ is represented by  $\mathcal{K}_j=\{\mathcal{K}_s\}_{s\in \mathcal{S}_j}$. The user-slice association is known and fixed. 
	
	In the considered model, each path belongs to one slice. The set of backhaul links on path $p$ (to serve user $k$) is denoted by $\mathcal{L}_k^p$. The set of network nodes on path $p$ is denoted by $\mathcal{U}_k^p$. The reserved rate for path $p$ to serve user $k$ is denoted by $r_k^p$. To wirelessly transmit the incoming data from each path, transmission resources should be sliced and reserved in APs. The two physical constraints that limit network resource slicing are as follows:
	\allowdisplaybreaks
	\begin{itemize}
		\item The aggregate amount of reserved traffic for those paths that go through a link cannot exceed the link capacity:
		\begin{align}
			\sum_{k=1}^K\sum_{p:\{p\in \mathcal{P}_k,l\in\mathcal{L}_k^p\}}r_k^p \leq C_l, \hspace{1cm}\forall l \in \mathcal{L},\label{eq:linkcap}
		\end{align}\normalsize
		where $C_l$ is the capacity of link $l$.
		\item The available resources in an AP are limited and are allocated to different downlinks created by that AP. The overall reserved transmission resources for those paths that share an AP must not exceed its capacity:
		\begin{align}
			\sum_{k=1}^K\sum_{p:\{p\in \mathcal{P}_k,b \in \mathcal{U}_k^p\}}t_k^p \leq C_b, \hspace{1cm}\forall b \in \mathcal{B},\label{eq:nodecap}
		\end{align}\normalsize
		where $t_k^p$ is the reserved transmission resource of AP $b$ to transmit incoming data from path $p$. Moreover, $C_b$ is the capacity of AP $b$.
		\item  The minimum aggregate reserved rate for users served by slice $s$ and tenant $j$ are denoted by $R_s^{\text{slc}}$ and $R_j^{\text{ten}}$, respectively, and we have
		\begin{align}
			\sum_{k \in\mathcal{K}_s}\sum_{p\in \mathcal{P}_k}r_k^p \geq R_s^{\text{slc}},\forall s,\:\:\text{and}\:\:\sum_{k \in\mathcal{K}_j}\sum_{p\in \mathcal{P}_k}r_k^p \geq R_j^{\text{ten}},\forall j.\label{eq:slicereq1}
		\end{align}\normalsize
		The constraints for minimum transmission resources for each slice and each tenant are as follows:
		\begin{align}
			\sum_{k \in\mathcal{K}_s}\sum_{p\in \mathcal{P}_k}t_k^p \geq B_s^{\text{slc}},\forall s,\:\:\text{and}\:\:\sum_{k \in\mathcal{K}_j}\sum_{p\in \mathcal{P}_k}t_k^p \geq B_j^{\text{ten}},\forall j.\label{eq:slicereq2}
		\end{align}
	\end{itemize}
	In addition to the above constraints, our multi-path model imposes another constraint. 
	\begin{itemize}
		\item The total reserved traffic for different paths that carry data to one user is equal to the reserved rate for that user. Hence, we have the following constraint:
		\begin{align}
			\sum_{p \in \mathcal{P}_k} r_k^p = r_k, \hspace{1cm}\forall k.\label{eq:ratesum}
		\end{align}
	\end{itemize}
	We denote the reserved link capacity for slice $s$ on link $l$ by $r_s^l$, which is calculated as follows:
	\begin{align}
		r_s^l=\sum_{k\in\mathcal{K}_s}\sum_{p:\{p\in \mathcal{P}_k,l\in\mathcal{L}_k^p\}}r_k^p, \hspace{1cm}\forall s, \forall l \in \mathcal{L}.\label{eq:linkres}
	\end{align}\normalsize
	Similarly, we denote the reserved transmission resources at AP $b$ by $t_s^b$, which is calculated as follows:
	\begin{align}
		t_s^b=\sum_{k\in\mathcal{K}_s}\sum_{p:\{p\in \mathcal{P}_k,b \in \mathcal{U}_k^p\}}t_k^p, \hspace{1cm}\forall s, \forall b \in \mathcal{B}.\label{eq:noderes}
	\end{align}\normalsize
	Furthermore, we define vectors of reserved resources for slices in the backhaul and RAN as $\mathbf{r}_s=\{r_s^l\}_{l\in\mathcal{L}}$ and $\mathbf{t}_s=\{t_s^b\}_{b \in \mathcal{B}}$.

	\section{Adaptive Multi-Tenant Network Slicing}\label{sec:adap}
	In this section, we propose a reservation-based network slicing approach. With the user demand and channel statistics, we adaptively optimize the activation and reconfiguration of network slices. Each tenant predicts future user demands and based on the expected revenue from serving users, it activates a number of its slices. After tenants activate a number of their slices to serve users, resource reservation for different slices is adaptively reconfigured across the network such that:
	\begin{enumerate}
		\item The expected revenue of tenants is maximized;
		\item The slice reconfiguration cost is minimized;
		\item The maximum isolation among reserved resources for slices is enforced; and
		\item The QoS requirements of users are met.
	\end{enumerate} 
	With a slotted time horizon, we consider a \textit{two time-scale} scheme to reserve resources for slices in the network. Resource reservations are carried out such that the revenue of tenants is maximized in long and short time-scales. Each tenant activates a subset of its slices to provide services to users in a long time-scale, while it reconfigures activated slices based on the statistics of user demands and channel capacities in short time-scales to improve the robustness of resource reservations and enhance QoS for users. The duration of each time-scale is chosen such that neither the statistics of user demands nor channel capacities changes within that time period. Examples for the duration of long and short time-scales are several days and a couple of hours, respectively \cite{pozza2020reconfiguring,lugones2017aidops}. We consider that the long and short time-scale durations are predetermined and kept fixed over time.

	\subsection{User Demand and Downlink Statistics}
	We assume that derived statistics remain identical in each time-scale. We add a subscript $S$ to the derived statistics for short time-scales  and $L$ to those derived for long time-scales. 
	The demand of user $k$, represented by $d_k$, follows a certain PDF, denoted by $f_{k,L_n}(d_k)$, in the $n^\text{th}$ long time-scale with a Cumulative Density Function (CDF) $F_{k,L_n}(d_k)$. Using the reserved rate for user $k$, $r_k$, the supportable demand of user $k$ is $\min(d_k,r_k)$ since the network can only provide $r_k$ to  user $k$ if the demand of user $k$ exceeds $r_k$. We leverage collected samples to estimate each PDF using Wolverton and Wagner estimator as discussed in \cite{comte2019bandwidth}. 

	In a practical fading environment, the transmission rate to each user in the coverage area depends on the random channel capacity (i.e., instantaneous achievable rate), which is a function of the amount of resources, e.g., bandwidth, supplied to the downlink \cite{1210731,musavian2015effective,4411539}. In the proposed model, we do not make any assumption about the type of allocated transmission resources in APs. This can be power, bandwidth, or time-slot. Since a path connects a data center to a user, a given path uniquely identifies the downlink by which a user is served. The achievable rate of a downlink, denoted by $v_k^p$, can follow any arbitrary PDF. Let $z_{k,L_n}^p(v_k^p,t_k^p)$ denote the PDF of the achievable rate of the downlink of path $p$ in the $n^\text{th}$ long time-scale, where the amount of the transmission resource supplied to the downlink is $t_k^p$. The CDF of the downlink achievable rate is denoted by $Z_{k,L_n}^p(v_k^p,t_k^p)$ in the $n^{\text{th}}$ long time-scale and determines the probability that the achievable rate of the downlink of path $p$ to serve user $k$ is at most $v_k^p$.  When the random achievable rate of a downlink is less than the reserved rate  $r_k^p$, the experienced outage is $r_k^p-v_k^p$. The probability that this amount of outage occurs in the $n^{\text{th}}$ short time-scale is $z_{k,S_n}^p(v_k^p,t_k^p)$.  In  light of the above arguments, the expected value of the outage of the downlink of path $p$ is obtained as
	\allowdisplaybreaks
	\begin{align}
		\int_{0}^{r_k^p}z_{k,S_n}^p(v_k^p,t_k^p)\:(r_k^p-v_k^p)dv_k^p,\hspace{.5cm}\forall p\in \mathcal{P}_k,\forall k.\label{eq:out1}
	\end{align}
	Since the achievable rate is a continuous random variable, we have the above integral. As $z_{k,S_n}^p(v_k^p,t_k^p)$ is always non-negative, then the expected outage is non-decreasing in $r_k^p$. In addition, we assume that the expected outage is non-increasing in $t_k^p$.

	Consider that $\phi_{k,S_n}(\cdot)$ is the revenue function of a tenant from serving user $k$ in the $n^{\text{th}}$ short time-scale. We consider $\phi_{k,S_n}(\cdot)$ to be a concave and non-decreasing function, e.g., $\phi_{k,S_n}(x)=1-\exp(-x), x\geq 0$. The expected revenue gained from serving user $k$ is calculated as follows:
	\begin{align}
		&\mathbb{E}_{d_k}\Bigg[\phi_{k,S_n}\Big(\min( r_k,d_k)\Big)\Bigg]=\int_0^{ r_k}\hspace{-.2cm}\phi_{k,S_n}(y)f_{k,S_n}(y)dy\nonumber\\
		&+\int_{ r_k}^\infty\hspace{-.2cm} \phi_{k,S_n}(r_k)f_{k,S_n}(y)dy.\label{eq:util1}
	\end{align}
	In the first integral, the random demand lies in $\interval[{0,\:r_k}]$, and in the second $d_k\in\interval[{r_k,\infty})$. Using this revenue function, we can maximize the expected supportable rates of users through the maximization of the revenue functions.
	
	Consider that $\phi_{k,L_n}(\cdot)$ is the revenue function of a tenant acquired by serving user $k$ in long time-scales. We consider $\phi_{k,L_n}(\cdot)$ to be a concave and non-decreasing function. In long time-scales, the set of users served by each slice, denoted by $\mathcal{K}_s$, is random and follows a certain probability mass function. To find the expected revenue of each slice from serving users in a long time-scale, we need to also take an expectation with respect to $\mathcal{K}_s$. The expected gained revenue by slice $s$ is the summation of expected revenue obtained from users served by slice $s$ \cite{caballero2018network} and is calculated as follows:
	\begin{align}
		&\mathbb{E}_{\mathcal{K}_s}\Bigg[\sum_{k\in\mathcal{K}_s}\mathbb{E}_{d_k}\Big[\phi_{k,L_n}\Big(\min( r_k,d_k)\Big)\Big]\Bigg].\label{eq:util2}
	\end{align}\normalsize
	Since the set of users served by each slice is uncertain in long time-scales, we need to take an expectation with respect to $\mathcal{K}_s$ from \eqref{eq:linkcap}--\eqref{eq:slicereq2} and \eqref{eq:out1}. 
	The argument of revenue functions $\phi_{k,S_n}(\cdot)$ and $\phi_{k,L_n}(\cdot)$ in \eqref{eq:util1} and \eqref{eq:util2}, which is $\min( r_k,d_k)$, is a concave function of $r_k$. Since $\phi_{k,S_n}(\cdot)$ and $\phi_{k,L_n}(\cdot)$ are both non-decreasing concave functions, based on the rule for function compositions in \cite[eq. (3.10)]{boyd2004convex}, revenue functions are also concave. After taking expectations, \eqref{eq:util1} and \eqref{eq:util2} remain concave functions.

	\subsection{Slice Activation}
	A tenant operates a slice only if the acquired revenue from activating that slice is considerable. The slice activation involves binary variables, which determine whether or not a slice is activated.  Suppose that $x_{s,1}$ is a binary variable and if $x_{s,1}=1$, then slice $s$ is activated and $x_{s,1}=0$ if slice $s$ is not activated. In addition, we consider the binary $x_{s,2}$ as a complement variable for $x_{s,1}$ such that
	\begin{align}
		x_{s,1}+x_{s,2}=1, \hspace{.5cm}\hspace{.5cm}\forall s\in \mathcal{S}_j,\forall j.\label{eq:sum}
	\end{align}
	The cost of activating a slice is denoted by $c_a$ and the cost of activated slices of tenant $j$ is \begin{align}
		c_a\norm{\{x_{s,1}\}_{s\in \mathcal{S}_j}}_0.\label{eq:cost}
	\end{align} 
	When $x_{s,1}=1$, this non-zero element is counted by the $\ell_0$-norm. While each tenant desires to maximize the expected gained revenue from users by increasing reserved rates, to improve the user QoS, it minimizes the expected outage of downlinks. Therefore, the overall utility function of the $j^{\text{th}}$ tenant from the activation of a subset of slices is
	\begin{align}
		&\sum_{s \in\mathcal{S}_j}\hspace{-0.1cm}\mathbb{E}_{\mathcal{K}_s}\hspace{-0.1cm}\Bigg[\sum_{k\in\mathcal{K}_s}\hspace{-0.1cm}\mathbb{E}_{d_k}\Big[\phi_{k,L_n}\Big(\hspace{-0.1cm}\min( r_k,d_k)\Big)\Big]\Bigg]-c_a\norm{\{x_{s,1}\}_{s\in \mathcal{S}_j}}_0\nonumber\\
		&-\sum_{s \in\mathcal{S}_j}\theta_s\mathbb{E}_{\mathcal{K}_s}\sum_{k\in\mathcal{K}_s}\left[\sum_{p\in \mathcal{P}_{k}}\int_{0}^{r_k^p}z_{k,L_n}^p(v_k^p,t_k^p)\:(r_k^p-v_k^p)dv_k^p\right],\label{eq:util}
	\end{align}
	where $\theta_s$ is a constant adjusted by the system engineer.
	
	When a slice is activated, resources are reserved for that slice. To relate $x_{s,1}$ to the reserved resources for a slice, we consider the following constraints:
	\begin{align}
		& \mathbb{E}_{\mathcal{K}_s}\left[\sum_{k\in\mathcal{K}_s}\sum_{p\in \mathcal{P}_k}r_k^p\right]\geq x_{s,1}R_s^{\text{slc}},\forall s\in \mathcal{S}_j,\forall j,\label{eq:rateslice}\\
		& \mathbb{E}_{\mathcal{K}_s}\left[\sum_{k\in\mathcal{K}_s}\sum_{p\in \mathcal{P}_k}t_k^p\right]\geq x_{s,1}B_s^{\text{slc}},\forall s\in \mathcal{S}_j,\forall j.\label{eq:exbandslice}
	\end{align}
	We note that \eqref{eq:rateslice} and \eqref{eq:exbandslice} imply that the slice requirement constraints are only considered for each activated slice $s$. Additionally, we consider
	\begin{align}
		& \mathbb{E}_{\mathcal{K}_s}\left[\sum_{k\in\mathcal{K}_s}\sum_{p\in \mathcal{P}_k}r_k^p\right]\leq x_{s,1}\Psi,\forall s\in \mathcal{S}_j,\forall j,\label{eq:actrate}\\
		& \mathbb{E}_{\mathcal{K}_s}\left[\sum_{k\in\mathcal{K}_s}\sum_{p\in \mathcal{P}_k}t_k^p\right]\leq x_{s,1}\Psi,\forall s\in \mathcal{S}_j,\forall j,\label{eq:actband}
	\end{align}
	to enforce zero resource reservation for a slice when $x_{s,1}=0$.  In \eqref{eq:actrate} and \eqref{eq:actband} , $\Psi$ is a large number and when $x_{s,1}=1$, \eqref{eq:actrate} and \eqref{eq:actband} are relaxed. The sparse slice activation optimization is formulated as follows:
	\begin{subequations}\label{opt:activate}
		\begin{align}
			\allowdisplaybreaks
			& \underset{\mathbf{r,t,x}}{\text{min}}
			& & -\sum_{j=1}^{J}\psi_j\sum_{s \in\mathcal{S}_j}\mathbb{E}_{\mathcal{K}_s}\Bigg[\sum_{k\in\mathcal{K}_s}\mathbb{E}_{d_k}\Big[\phi_{k,L_n}\Big(\min( r_k,d_k)\Big)\Big]\Bigg]\hspace{-.0cm}\nonumber\\
			&&&\hspace{-0.9cm}+\sum_{j=1}^{J}\psi_j\sum_{s \in\mathcal{S}_j}\theta_s\mathbb{E}_{\mathcal{K}_s}\sum_{k\in\mathcal{K}_s}\left[\sum_{p\in \mathcal{P}_{k}}\int_{0}^{r_k^p}\hspace{-.3cm}z_{k,L_n}^p(v_k^p,t_k^p)\:(r_k^p-v_k^p)dv_k^p\right]\nonumber\\
			&&&\hspace{-0.9cm}+\sum_{j=1}^{J}c_a\norm{\{x_{s,1}\}_{s\in \mathcal{S}_j}}_0\label{opt:act}\\
			& \text{s.t.}&& \sum_{j=1}^{J}\sum_{s \in\mathcal{S}_j}\mathbb{E}_{\mathcal{K}_s}\left[\sum_{k\in\mathcal{K}_s}\sum_{p:\{p\in \mathcal{P}_k,l\in\mathcal{L}_k^p\}}r_k^p\right]\leq C_l, \forall l \label{eq:explink}\\
			&&& \sum_{j=1}^{J}\sum_{s \in\mathcal{S}_j}\mathbb{E}_{\mathcal{K}_s}\left[\sum_{k\in \mathcal{K}_s}\sum_{p:\{p\in \mathcal{P}_k,b \in \mathcal{U}_k^p\}}t_k^p\right]\leq C_b,\forall b\label{eq:expnode}\\
			&&& \sum_{s \in\mathcal{S}_j}\mathbb{E}_{\mathcal{K}_s}\left[\sum_{k\in\mathcal{K}_s}\sum_{p\in \mathcal{P}_k}r_k^p\right]\geq R_j^{\text{ten}}\label{eq:slcrateten},\forall j,\\
			&&& \sum_{s \in\mathcal{S}_j}\mathbb{E}_{\mathcal{K}_s}\left[\sum_{k\in\mathcal{K}_s}\sum_{p\in \mathcal{P}_k}t_k^p\right]\geq B_j^{\text{ten}},\forall j,\label{eq:expbandten}\\
			&&& \eqref{eq:ratesum},\eqref{eq:sum},\eqref{eq:rateslice},\eqref{eq:exbandslice}, \eqref{eq:actrate}, \eqref{eq:actband},\nonumber\\
			&&& x_{s,1},x_{s,2}\in\{0,1\}, r_k^p, t_k^p\geq 0, \forall p\in \mathcal{P}_k,\forall k,\forall s,\nonumber
		\end{align}
	\end{subequations}
	where $\psi_j$ is a positive weight given to tenant  $j$ in order to adjust priorities. Moreover, $\mathbf{x}=\{x_{s,1},x_{s,2}\}_{s\in \mathcal{S}_j,j\in\mathcal{J}}$. 
	
	\begin{remark}
		Suppose that multiple paths available to user $k$ share a downlink (the last hop). The aggregate outage of downlinks for serving user $k$ is calculated as follows:
		\begin{align}
			\sum_{w \in \mathcal{W}_k}\int_{0}^{\sum_{p:\{p\in \mathcal{P}_k,w\in p\}}r_k^p}\hspace{-.2cm} z_{k,L_n}^w(v_{k}^w,t_{k}^w)(\hspace{-.2cm}\sum_{p:\{p\in \mathcal{P}_k,w\in p\}}\hspace{-.5cm}r_k^p-v_{k}^w)dv_{k}^w\label{eq:mul},
		\end{align}
		where $\mathcal{W}_k$ is the set of downlinks, each denoted by $w$, for serving user $k$. When multiple paths available to user $k$ share a downlink, the above outage is placed in \eqref{opt:act} instead of its second term, which includes \eqref{eq:out1}.
	\end{remark}

	\subsection{Slice Reconfiguration}
	Here, we consider slicing network resources, both in the backhaul and RAN, among several network tenants in short time-scales. The number of users that are served by each slice is known in short time-scales. As PDFs of user demands and achievable rates of downlinks change over time, allocated resources to each slice should be adapted. However, this change involves a reconfiguration cost and service interruptions. Therefore, an adaptive approach should reconfigure each slice only if the utility function for that tenant significantly changes with new PDFs. The state of a slice is reflected by the reserved resources in the backhaul and RAN. The reconfiguration cost of a slice is a discrete function of the state difference \cite{wang2019reconfiguration}. We underline the vector of reserved resources in each short time-scale. If we have $\norm{\underline{\mathbf{r}}_s^{n}-\underline{\mathbf{r}}_s^{n-1}}_2 >0$ or $\norm{\underline{\mathbf{t}}_s^{n}-\underline{\mathbf{t}}_s^{n-1}}_2  >0$, then slice $s$ varies in two consecutive time-scales. The reconfiguration cost is proportional to the number of layout changes for slices. We use $\ell_0$-norm to detect reconfigurations for slices. Consider that the slice reconfiguration cost is denoted by $c_r$. To minimize the slice reconfiguration cost in the network from time $n-1$ to $n$, we minimize \allowdisplaybreaks
	\begin{align}
		&c_r\big(\norm{\{\norm{\underline{\mathbf{r}}_s^{n}-\underline{\mathbf{r}}_s^{n-1}}_2\}_{s \in\mathcal{S}_j,j\in\mathcal{J}}}_0\hspace{-0.2cm}+\hspace{-0.1cm}\norm{\{\norm{\underline{\mathbf{t}}_s^{n}-\underline{\mathbf{t}}_s^{n-1}}_2\}_{s \in\mathcal{S}_j,j\in\mathcal{J}}}_0\big).\label{eq:change}
	\end{align}\normalsize
	
	Here, we discuss the isolation of reserved resources for slices. When the random traffic demand from slice $s$ exceeds the reserved traffic rate for it, the tenant needs to allocate more resources to support the demand. This can violate the resource  reservation for other slices and hurt the QoS for users served by other slices. Each slice desires to isolate its reserved resources as much as possible \cite{chien2019end,yan2019intelligent}. Although static network slicing provides complete resource isolation among slices, it performs poorly due to its inflexibility in supporting time-varying user demands \cite{lee2018efficient}. The dynamic network slicing better supports user demands, although it increases the risk for violation of resource isolation for slices. To increase the slice isolation, we minimize the expected excessive demand from each slice, which is calculated as follows: 
	\begin{align}
		\int_{\sum_{k \in\mathcal{K}_s}\sum_{p\in \mathcal{P}_{k}}r_{k}^{p}}^{\infty}(y-\sum_{k \in\mathcal{K}_s}\sum_{p\in \mathcal{P}_{k}}r_{k}^{p})f_{s,S_n}(y)dy.\label{eq:iso}
	\end{align}
	In the above expression, $y$ is the integration variable and corresponds to the aggregate demanded rate from slice $s$, $\sum_{k \in\mathcal{K}_s}d_k$, and when $y\in \interval[{\sum_{k \in\mathcal{K}_s}\sum_{p\in \mathcal{P}_{k}}r_{k}^{p},\infty})$, the slice is unable to support the demand. In \eqref{eq:iso}, $f_{s,S_n}(\cdot)$ is the PDF for the aggregate demand from slice $s$. We utilize the data-driven density estimator given in \cite{comte2019bandwidth} to numerically estimate $f_{s,S_n}(\cdot)$ for each short time-scale.

	Each slice provides its users with particular QoS guarantees. We consider that each slice guarantees that the expected outage \eqref{eq:out1} for a served user is less than a certain fraction, denoted by $\beta_s$, of the reserved rate for that user in a short time-scale. Therefore, we have
	\begin{align}
		\int_{0}^{r_k^p}z_{k,S_n}^p(v_k^p,t_k^{p})\:(r_k^{p}-v_k^p)dv_k^p\leq \beta_sr_k^{p},\hspace{1cm}\forall k\in \mathcal{K}_s.\label{eq:out2}
	\end{align}
	The objective function that we minimize is 
	\begin{align}\textstyle
		&\Gamma(\mathbf{r})= \sum_{j=1}^{J}\sum_{s \in\mathcal{S}_j}\int_{\sum_{k \in\mathcal{K}_s}\sum_{p\in \mathcal{P}_{k}}r_{k}^{p}}^{\infty}\hspace{-.4cm}(y-\sum_{k \in\mathcal{K}_s}\sum_{p\in \mathcal{P}_{k}}r_{k}^{p})f_{s,S_n}(y)dy\nonumber\\
		& -\sum_{j=1}^{J}\psi_j\Bigg(\sum_{s \in\mathcal{S}_j}\sum_{k \in\mathcal{K}_s}\mathbb{E}_{d_k}\Bigg[\phi_{k,S_n}\Big(\min(\sum_{p \in \mathcal{P}_k} r_k^{p},d_k)\Big)\Bigg]\Bigg).\label{eq:obj}
	\end{align}
	Using \eqref{eq:obj}, the formulated joint optimization to minimize the reconfiguration costs, maximize the isolation of reserved resources, guarantee QoS for slice users, and maximize the utilities of tenants from serving their users in short time-scales is as follows:
	\allowdisplaybreaks
	\begin{align}
		& \underset{\mathbf{r},\mathbf{t}\geq \mathbf{0}}{\text{min}}
		& & \hspace{-.2cm}\Gamma(\mathbf{r})+c_r\norm{\{\norm{\mathbf{r}_s-\underline{\mathbf{r}}_s^{n-1}}_2\}_{s \in\mathcal{S}_j,j\in\mathcal{J}}}_0\nonumber\\
		&&&+c_r\norm{\{\norm{\mathbf{t}_s-\underline{\mathbf{t}}_s^{n-1}}_2\}_{s \in\mathcal{S}_j,j\in\mathcal{J}}}_0\label{opt:slice1}\\
		& \text{s.t.}
		& & \eqref{eq:linkcap}, \eqref{eq:nodecap}, \eqref{eq:slicereq1}, \eqref{eq:slicereq2}, \eqref{eq:linkres}, \eqref{eq:noderes}, \eqref{eq:out2}.\nonumber
	\end{align}
	\begin{remark}
		If multiple paths available to user $k$ share a downlink, \eqref{eq:out2} can be rewritten as follows:
		\begin{align}
			&\int_{0}^{\sum_{p:\{p\in \mathcal{P}_k,w\in p\}}r_k^p}\hspace{-.2cm} z_{k,S_n}^w(v_{k}^w,t_{k}^w)(\hspace{-.2cm}\sum_{p:\{p\in \mathcal{P}_k,w\in p\}}\hspace{-.5cm}r_k^p-v_{k}^w)dv_{k}^w\nonumber\\
			&\leq \beta_s(\sum_{p:\{p\in \mathcal{P}_k,w\in p\}}r_k^p),\hspace{1cm}\forall k\in \mathcal{K}_s. \label{eq:rewritte}
		\end{align}.
	\end{remark}

	\section{The Proposed Approach for Slice Activation}\label{sec:slact}
	In optimization \eqref{opt:activate}, the constraints \eqref{eq:rateslice}--\eqref{eq:actband} and \eqref{eq:explink}--\eqref{eq:expbandten} can be rewritten in the affine form after we substitute a weighted average for the expectation with respect to the set of users associated with a slice, i.e., $\mathbb{E}_{\mathcal{K}_s}[\cdot]$. For example, we can rewrite \eqref{eq:explink} in the following form:
	\begin{align}
		\sum_{j=1}^{J}\sum_{s \in\mathcal{S}_j}\sum_{\mathcal{K}_s^u}P_{u}(\sum_{k\in\mathcal{K}_s^u}\sum_{p:\{p\in \mathcal{P}_k,l\in\mathcal{L}_k^p\}}r_k^{p})\leq C_l,\hspace{.3cm}l\in\mathcal{L},\nonumber
	\end{align}
	where $\mathcal{K}_s^u$ is the $u^{\text{th}}$ possible set of potential users served by slice $s$ with a $P_u$ probability of occurring. One can similarly rewrite \eqref{eq:rateslice}--\eqref{eq:actband} and \eqref{eq:expnode}--\eqref{eq:expbandten} in the above affine form.
	
	We let $\mathbf{x}_{j,1}=\{x_{s,1}\}_{s\in \mathcal{S}_j}$ and $\mathbf{x}_{j,2}=\{x_{s,2}\}_{s\in \mathcal{S}_j}$.  Before developing an approach to solve the slice activation optimization, we consider  the following $\ell_q$-norm:
	
	\begin{align}\norm{\mathbf{x}_{j,1}+\epsilon\mathbf{1}}_q^q,\label{opt:ub}
	\end{align}
	where $0<q<1$ and $\epsilon$ is a small positive number to make \eqref{opt:ub} differentiable. When $q$ and $\epsilon$ are close to zero, one can use $c_a\norm{\mathbf{x}_{j,1}+\epsilon\mathbf{1}}_q^q$ to efficiently approximate \eqref{eq:cost} \cite{4303060,7101837,lai2011unconstrained}. 
	
	Although \eqref{opt:ub} approximates the $\ell_0$-norm, we note that $\norm{\mathbf{x}_{j,1}+\epsilon\mathbf{1}}_q^q$ is concave and \eqref{opt:ub} is not easy to minimize. Therefore, if we add $c_a\norm{\mathbf{x}_{j,1}+\epsilon\mathbf{1}}_q^q$ to the objective function of \eqref{opt:activate}, it is still hard to solve. To tackle this problem, we consider a quadratic upper-bound \cite[eq. (12)]{hong2015unified} for \eqref{opt:ub} and successively minimize the upper-bound  \cite{razaviyayn2013unified}. We denote the $i^{\text{th}}$ iterate of $\mathbf{x}_{j,1}$ by $\underline{\mathbf{x}}_{j,1}^i$ and obtain the following upper-bound:
	\begin{align}
		&\norm{\mathbf{x}_{j,1}+\epsilon\mathbf{1}}_q^q \leq \norm{\underline{\mathbf{x}}_{j,1}^i+\epsilon\mathbf{1}}_q^q\nonumber\\
		&+q\big\langle\{(\underline{x}_{s,1}^i+\epsilon)^{q-1}\}_{s\in \mathcal{S}_j}, (\mathbf{x}_{j,1}-\underline{\mathbf{x}}_{j,1}^i)\big\rangle+\frac{e}{2}\norm{\mathbf{x}_{j,1}-\underline{\mathbf{x}}_{j,1}^i}_2^2,\label{eq:up}
	\end{align}
	where $e>0$ is a small number and $\langle\cdot,\cdot\rangle$ represents the inner product.
	We also relax the constraint $x_{s,1},x_{s,2}\in\{0,1\}$ and instead include $0\leq x_{s,1} \leq 1$ and $0\leq x_{s,2} \leq 1$. We substitute the RHS of \eqref{eq:up} for $\norm{\{x_{s,1}\}_{s\in \mathcal{S}_j}}_0$ and iteratively solve \eqref{opt:activate}.
	The above upper-bound approximation is locally tight up to the first order. In other words, in any arbitrary point $\mathbf{x}_{j,1}=\underline{\mathbf{x}}_{j,1}^i$, $\norm{\mathbf{x}_{j,1}+\epsilon\mathbf{1}}_q^q$ and the upper-bound have the same value and the same gradient. The above upper-bound is also continuous. Thus, the upper-bound satisfies all four convergence conditions given in \cite[Assumption 2]{razaviyayn2013unified}. Based on \cite[Theorem 2]{razaviyayn2013unified}, the obtained solution by the successive upper-bound minimization  is a stationary (KKT) solution to problem \eqref{opt:activate} with relaxed $\mathbf{x}$ and \eqref{opt:ub} in the objective function.
	
	We note that \eqref{opt:activate} is convex with respect to $\mathbf{r}$. However, \eqref{opt:act} is not always convex in $\mathbf{t}$. To address this issue, we add the following proximal terms \begin{align}
		\sum_{j=1}^{J}\sum_{k\in\mathcal{K}_j}\sum_{p\in \mathcal{P}_{k}}\frac{\zeta_{k}^p}{2}\norm{t_{k}^p-\hat{t}_{k}^p}_2^2,\label{eq:ubkir}
	\end{align} \normalsize
	with sufficient weights $\zeta_{k}^p>0$ to the objective function of \eqref{opt:activate} with the purpose of convexifying the objective function with respect to $\mathbf{t}$ locally. In the proximal term, $\hat{t}_{k}^{p}$ is the most recent iterate.  The objective function with proximal terms is an upper-bound for the original objective function, and we successively minimize the upper-bound \cite{razaviyayn2013unified}.

	To efficiently solve \eqref{opt:activate} in each iteration of minimizing \eqref{opt:act} with \eqref{eq:up} and \eqref{eq:ubkir}, we propose a Frank-Wolfe algorithm. The Frank-Wolfe algorithm linearizes the objective function and finds a feasible descent direction within the set of constraints \cite[Sec. 3.2.2]{bertsekas1999nonlinear}.  We superscript the $m^\text{th}$ iterate of the Frank-Wolfe algorithm by $m$. Using the linearized objective function, the direction-finding subproblem in the $m+1^\text{th}$ iteration of the Frank-Wolfe algorithm becomes
	\begin{align}
		\allowdisplaybreaks
		& \underset{\overline{\mathbf{r}},\overline{\mathbf{t}},\overline{\mathbf{x}}\geq \mathbf{0}}{\text{min}}
		& &	\sum_{j=1}^{J}\psi_j\sum_{s \in\mathcal{S}_j}\sum_{\mathcal{K}_s^u}P_{u}\Big(\sum_{k\in\mathcal{K}_s^u}\sum_{p\in\mathcal{P}_k}\hspace{-.0cm}\omega_k^p(r_k^{p,m},t_k^{p,m})\overline{t}_k^p\Big)\nonumber\\
		&&&\hspace{-.9cm}+\sum_{j=1}^{J}\psi_j\sum_{s \in\mathcal{S}_j}\sum_{\mathcal{K}_s^u}P_{u}\Big(\sum_{k\in\mathcal{K}_s^u}\sum_{p\in \mathcal{P}_k}\hspace{-.0cm}\chi_k^p(\{r_k^{p,m}\}_{p\in \mathcal{P}_k},t_k^{p,m})\overline{r}_k^p\Big)\hspace{-.0cm}\nonumber\\
		&&&\hspace{-.9cm}+\sum_{j=1}^{J}\sum_{s \in\mathcal{S}_j}\varrho_{s,1}^1(x_{s,1}^{m})\overline{x}_{s,1}\nonumber\\
		& \text{s.t.}&&\hspace{-.5cm} \eqref{eq:sum},\eqref{eq:rateslice}-\eqref{eq:actband},\eqref{eq:explink}-\eqref{eq:expbandten},0\leq \overline{x}_{s,1},\overline{x}_{s,2}\leq 1,\label{opt:t}
	\end{align}
	where $\omega_k^p(r_k^{p,m},t_k^{p,m})$ and $\chi_k^p(\{r_k^{p,m}\}_{p\in \mathcal{P}_k},t_k^{p,m})$ are 
	\begin{align}\textstyle
		&\omega_k^p(r_k^{p,m},t_k^{p,m})=\zeta_k^p(t_{k}^{p,m}-\hat{t}_{k}^{p})\nonumber\\
		&+\theta_s\int_{ 0}^{r_k^{p,m}}\Big(r_k^{p,m}-v_k^p\Big)\frac{\partial z_{k,L_n}^p(v_k^p,t_k^{p})}{\partial t_k^{p} }\mid_{t_k^{p}=t_k^{p,m}}dv_k^p,\label{eq:lin1}
	\end{align}
	and 
	\begin{align}\textstyle
		&\chi_k^p(\{r_k^{p,m}\}_{p\in \mathcal{P}_k},t_k^{p,m})=\theta_sZ_{k,L_n}^p(r_k^{p,m},t_{k}^{p,m})\nonumber\\
		&-\Bigg(\phi_{k,L_n}'\Big(\sum_{p \in \mathcal{P}_k} r_k^{p,m}\Big)\Big(1-F_{k,L_n}(\sum_{p \in \mathcal{P}_k} r_k^{p,m})\Big)\Bigg).\label{eq:lin}
	\end{align}
	Furthermore, $\varrho_{s,1}^1(x_{s,1}^{m})$ is calculated as follows:
	\begin{align}
		&\varrho_{s,1}^1(x_{s,1}^{m})=c_a\left(\frac{q}{(\underline {x}_{s,1}^{i}+\epsilon)^{1-q}}+e(x_{s,1}^{m}-\underline{x}_{s,1}^i)\right).\label{eq:xdir1}
	\end{align}
	The solution to \eqref{opt:t} can be easily obtained using CPLEX or Gurobi.	
	We update variables in the $m+1^{\text{th}}$ iteration of the Frank-Wolfe algorithm as follows:
	\begin{subequations}
		\begin{eqnarray}
			&t_k^{p,m+1}=t_k^{p,m}+\pi^m(\overline{t}_k^{p}-t_k^{p,m}),\label{eq:update1}\\
			&r_k^{p,m+1}=r_k^{p,m}+\pi^m(\overline{r}_k^p-r_k^{p,m}),\label{eq:update2}\\
			&x_{s,1}^{m+1}=x_{s,1}^{m}+\pi^m(\overline{x}_{s,1}-x_{s,1}^{m}),\label{eq:update3}\\
			&x_{s,2}^{m+1}=x_{s,2}^{m}+\pi^m(\overline{x}_{s,2}-x_{s,2}^{m}),\label{eq:update4}
		\end{eqnarray}
	\end{subequations}
	where $\pi^m=\frac{2}{2+m}$. We update $\omega_k^p(r_k^{p,m+1},t_k^{p,m+1})$, $\chi_k^p(\{r_k^{p,m+1}\}_{p\in \mathcal{P}_k},t_k^{p,m+1})$, and $\varrho_{s,1}^1(x_{s,1}^{m+1})$ as given in \eqref{eq:lin1}, \eqref{eq:lin}, and \eqref{eq:xdir1}, respectively. We continue until each $r_k^{p,m}$, $t_k^{p,m}$, $x_{s,1}^{m}$, and $x_{s,2}^{m}$ converges. After the convergence of the Frank-Wolfe algorithm, we update $\hat{\mathbf{t}}=\mathbf{t}^{m}$. We continue solving with the Frank-Wolfe algorithm until $\hat{\mathbf{t}}$ converges. The summary of the proposed Frank-Wolfe approach is given in Algorithm \ref{al:parallel1}.
	\begin{remark}
		If \eqref{eq:mul} is considered in \eqref{opt:act}, then \eqref{eq:lin1} and \eqref{eq:lin} are changed to 
		\begin{align}\textstyle
			&\omega_k^w(\{r_k^{p,m}\}_{p:\{p\in \mathcal{P}_k,w\in p\}},t_k^{p,w})\nonumber\\
			&=\theta_s\int_{ 0}^{\sum_{p:\{p\in \mathcal{P}_k,w\in p\}}r_k^{p,m}}\hspace{-.2cm}\Big(\hspace{-.4cm}\sum_{p:\{p\in \mathcal{P}_k,w\in p\}}\hspace{-.7cm}r_k^{p,m}-v_k^w\Big)\nonumber\\
			&\times\frac{\partial z_{k,L_n}^p(v_k^w,t_k^{w})}{\partial t_k^{w} }\mid_{t_k^{w}=t_k^{w,m}}dv_k^w+\zeta_k^w(t_{k}^{w,m}-\hat{t}_{k}^{w}),  w\in p, \forall k,\nonumber
		\end{align}
		and 
		\begin{align}\textstyle
			&\chi_k^p(\{r_k^{p,m}\}_{p\in \mathcal{P}_k},t_k^{w,m})=\theta_sZ_{k,L_n}^p(\sum_{p:\{p\in \mathcal{P}_k,w\in p\}}r_k^{p,m},t_{k}^{w,m})\nonumber\\
			&-\Bigg(\phi_{k,L_n}'\Big(\sum_{p \in \mathcal{P}_k} r_k^{p,m}\Big)\Big(1-F_{k,L_n}(\sum_{p \in \mathcal{P}_k} r_k^{p,m})\Big)\Bigg),  w\in p, \forall k.\nonumber
		\end{align}
	\end{remark}
	\begin{algorithm}[t!]
		0.	\textbf{Initialization} Assign values to $\mathbf{r}^{0}$, $\mathbf{t}^{0}$, $\mathbf{x}_{j,1}^{0}$, $\mathbf{x}_{j,2}^{0}$, $m=0$; \\
		\Repeat{$\hat{\mathbf{t}}$ converges}{
			1. $\hat{\mathbf{t}}=\mathbf{t}^{m}$\;
			\Repeat{$\mathbf{r}^m,\mathbf{t}^m,\mathbf{x}^m$ converges}{
				2. Find $\omega_k^p(r_k^{p,m},t_k^{p,m})$, $\chi_k^p(\{r_k^{p,m}\}_{p\in \mathcal{P}_k},t_k^{p,m})$, and  $\varrho_{s,1}^1(x_{s,1}^{m})$\;
				3. Solve \eqref{opt:t} and update variables using \eqref{eq:update1}--\eqref{eq:update4}\;
				4. $m=m+1$\;
			}
		}     
		5. $\underline{\mathbf{x}}^{i+1}=\mathbf{x}^{m}$;
		\caption{The proposed Frank-Wolfe Algorithm}
		\label{al:parallel1}
	\end{algorithm}
	
	After Algorithm \ref{al:parallel1} converges, we update $\underline{\mathbf{x}}^{i+1}$ as $\underline{\mathbf{x}}^{i+1}=\mathbf{x}^{m}$ and solve again with Algorithm \ref{al:parallel1}. When we solve a few iterations with Algorithm \ref{al:parallel1}, it is possible that $\underline{\mathbf{x}}^{i}$ does not become binary. We use $\ell_q$-regularization to promote binary solutions. Consider the following optimization problem:
	\begin{subequations}\label{opt:sp}
		\begin{align}
			& \underset{\textbf{x}_s}{\text{min}}
			& & \hspace{-1.5cm}\norm{\mathbf{x}_s+\epsilon\mathbf{1}}_q^q\label{opt:sp1}\\
			& \text{s.t.}
			& &\hspace{-1.5cm}\eqref{eq:sum}, 0\leq x_{s,1} \leq 1,0\leq x_{s,2}\leq 1,\forall s,\label{opt:sp3}
		\end{align}\normalsize
	\end{subequations}
	where $\mathbf{x}_s=\{x_{s,1},x_{s,2}\}$. The optimal solution of \eqref{opt:sp} is always binary, i.e., $x_{s,1}^*,x_{s,2}^*\in\{0,1\}$ \cite{liu2015iterative,jiang2016l_p}. Similar to \eqref{opt:ub}, \eqref{opt:sp1} is concave. Therefore, we consider a quadratic upper-bound \cite[eq. (12)]{hong2015unified} for it as follows:
	\begin{align}
		&\norm{\mathbf{x}_s+\epsilon\mathbf{1}}_q^q\leq \norm{\underline{\mathbf{x}}_s^i+\epsilon\mathbf{1}}_q^q+q\big\langle\left\{(\underline{x}_{s,1}^{i}+\epsilon)^{q-1},(\underline{x}_{s,2}^{i}+\epsilon)^{q-1}\right\}\nonumber\\
		&, (\mathbf{x}_s-\underline{\mathbf{x}}_s^i)\big\rangle+\frac{e}{2}\norm{\mathbf{x}_s-\underline{\mathbf{x}}_s^i}_2^2,\hspace{.3cm}\forall s.\label{eq:ub4}
	\end{align} 
	We can give a weight $\gamma^i$ to the RHS of \eqref{eq:ub4} and include it for all slices in \eqref{opt:act} to promote binary solutions. We iteratively solve \eqref{opt:activate} with Algorithm \ref{al:parallel1}. In the $m+1^\text{th}$ iteration of the Frank-Wolfe algorithm, the linearized quadratic upper-bound \eqref{eq:ub4} becomes
	\begin{align}
		\varrho_{s,1}^2(x_{s,1}^{m})x_{s,1}+\varrho_{s,2}^2(x_{s,2}^{m})x_{s,2},\label{eq:lin2}
	\end{align}
	where 
	\begin{align}
		&\varrho_{s,1}^2(x_{s,1}^{m})=\left(\frac{q}{(\underline{x}_{s,1}^{i}+\epsilon)^{1-q}}+e(x_{s,1}^{m}-\underline{x}_{s,1}^i)\right),\nonumber\\
		&\:\:\varrho_{s,2}^2(x_{s,2}^{m})=\left(\frac{q}{(\underline{x}_{s,2}^{i}+\epsilon)^{1-q}}+e(x_{s,2}^{m}-\underline{x}_{s,2}^i)\right). \nonumber
	\end{align}
	We exclude the linearized RHS of \eqref{eq:ub4}, i.e., \eqref{eq:lin2}, in the first $I$ steps of applying Algorithm \ref{al:parallel1} to solve \eqref{opt:activate}. Therefore, we set $\gamma^i=0$ when $i\leq I$. Next, we increase $\gamma^i$ iteratively and continue with $\gamma^{i+1}\geq \gamma^i>0$ when $i>I$. As the upper-bound in the RHS of \eqref{eq:ub4} iteratively receives a higher weight, the solutions become closer to $0$ or $1$. The summary of the proposed approach to solve \eqref{opt:activate} is given in Algorithm \ref{al:parallel2}. We iteratively continue solving with Algorithm \ref{al:parallel2}  until each $x_{s,1}^m$ and $x_{s,2}^m$ becomes binary.
	\begin{algorithm}[t!]
		0.	\textbf{Initialization} Assign values to $\underline{\mathbf{x}}^{0}$, $i=0$; \\
		\Repeat{$\underline{\mathbf{x}}^{i}$ becomes binary}{
			1. Apply Algorithm \ref{al:parallel1} to find $\underline{\mathbf{x}}^{i+1}$\;
			2. Update $\gamma^{i+1}$\;
			3. $i=i+1$\;
		}     
		\caption{Proposed algorithm to solve \eqref{opt:activate}}
		\label{al:parallel2}
	\end{algorithm}
	\section{The Distributed and Scalable Algorithm for Slice Reconfiguration}\label{sec:slrec}
	Problem \eqref{opt:slice1} is difficult to solve for two reasons: 1) $\ell_0$-norm is neither convex nor continuous; and 2)  \eqref{eq:out2} is not necessarily a convex constraint in $r_k^p$ and $t_k^p$ for an arbitrary $z_{k,S_n}^p(v_k^p,t_k^p)$. To tackle the difficulty of solving \eqref{opt:slice1}, we copy variables as $\mathbf{r}=\mathbf{g}$ and decouple  $\norm{\{\norm{\mathbf{r}_s-\underline{\mathbf{r}}_s^{n-1}}_2\}_{s \in\mathcal{S}_j,j\in\mathcal{J}}}_0$ and  \eqref{eq:out2} from \eqref{opt:slice1}. To alleviate the non-convexity of \eqref{opt:slice1} due to $\ell_0$-norm, we use the group LASSO regularization \cite{candes2008enhancing} with copied  $\mathbf{g}$. 
	We substitute 
	\begin{eqnarray}
		\sum_{j=1}^{J}\sum_{s\in \mathcal{S}_j}c_r(a_s^{1,i}\norm{\mathbf{g}_s-\underline{\mathbf{r}}_s^{n-1}}_2+a_s^{2,i}\norm{\mathbf{t}_s-\underline{\mathbf{t}}_s^{n-1}}_2),\label{eq:lasso}
	\end{eqnarray}
	for $\ell_0$-norms and iteratively minimize \eqref{eq:lasso}, where $i$ is the iteration counter. The update rules for coefficients in the $i^{\text{th}}$ iteration are 
	\begin{align}
		a_s^{1,i}=\frac{a_s^{1,0}}{\norm{\mathbf{g}_s^{i-1}-\underline{\mathbf{r}}_s^{n-1}}_2+\epsilon_1}\:\: \text{and}\:\:a_s^{2,i}=\frac{a_s^{2,0}}{\norm{\mathbf{t}_s^{i-1}-\underline{\mathbf{t}}_s^{n-1}}_2+\epsilon_2},\label{eq:updatelasso}
	\end{align}
	where $a_s^{1,0},a_s^{2,0}>0$. Furthermore, $\epsilon_1$ and $\epsilon_2$ are small positive numbers to ensure that a zero-valued norm in the denominator does not strictly prohibit $\norm{\mathbf{g}_s^i-\underline{\mathbf{r}}_s^{n-1}}_2=0$ in the next step.

	Using the group LASSO regularization, \eqref{opt:slice1} can be approximated by the following problem: 
	\begin{align}
		& \underset{\mathbf{r},\mathbf{t},\mathbf{g}\geq \mathbf{0}}{\text{min}}
		& & \hspace{-.2cm}\Gamma(\mathbf{r})+\sum_{j=1}^{J}\sum_{s\in \mathcal{S}_j}\hspace{-.1cm}c_ra_s^{1,i}\norm{\mathbf{g}_s-\underline{\mathbf{r}}_s^{n-1}}_2\label{opt:slice2}\nonumber\\
		&&&+\sum_{j=1}^{J}\sum_{s\in \mathcal{S}_j}\hspace{-.1cm}c_ra_s^{2,i}\norm{\mathbf{t}_s-\underline{\mathbf{t}}_s^{n-1}}_2\\
		& \text{s.t.}
		& & \eqref{eq:linkcap}, \eqref{eq:nodecap}, \eqref{eq:slicereq1}, \eqref{eq:slicereq2}, \eqref{eq:linkres}, \eqref{eq:noderes}, \eqref{eq:out2}, \mathbf{r}=\mathbf{g}.\nonumber
	\end{align}
	To solve each subproblem of the group LASSO \eqref{opt:slice2}, we propose an ADMM algorithm. 
	We dualize the constraint $
	\mathbf{r}=\mathbf{g}$ and find the augmented Lagrangian as follows:
	\allowdisplaybreaks
	\begin{align}\textstyle
		&L(\mathbf{r},\mathbf{t},\mathbf{g},\boldsymbol{\tau})=\sum_{j=1}^{J}\sum_{k \in\mathcal{K}_j}\sum_{p\in \mathcal{P}_k}\hspace{-.15cm}\left(\tau_k^p(r_k^{p} - g_k^{p})+\frac{\rho}{2}(r_k^{p} - g_k^{p})^2\right)\nonumber\\
		&+\Gamma(\mathbf{r})+\sum_{j=1}^{J}\sum_{s\in \mathcal{S}_j}\hspace{-.1cm}c_r(a_s^{1,i}\norm{\mathbf{g}_s-\underline{\mathbf{r}}_s^{n-1}}_2\hspace{-.1cm}+a_s^{2,i}\norm{\mathbf{t}_s-\underline{\mathbf{t}}_s^{n-1}}_2),\label{eq:lag}
	\end{align}
	where $\tau_k^p$ is a Lagrange multiplier and $\rho$ is a penalty parameter. We substitute \eqref{eq:lag} in the objective function of \eqref{opt:slice2} and alternatively minimize with respect to $\mathbf{r}$ as one block and $\{\mathbf{g},\mathbf{t}\}$ as the second block. We optimize with respect to both blocks in the $\pi+1^{\text{th}}$ iteration and  do $\boldsymbol{\tau}^{\pi+1}=\boldsymbol{\tau}^{\pi}+\rho(\mathbf{r}^{\pi+1}-\mathbf{g}^{\pi+1})$.

	\subsection{Subproblem with respect to $\mathbf{r}$}\label{sec:r}
	The subproblem with respect to $\mathbf{r}$ is as follows:
	\begin{align}
		& \underset{\mathbf{r}\geq \mathbf{0}}{\text{min}}
		& & \hspace{-.0cm}\Gamma(\mathbf{r})+\sum_{j=1}^{J}\sum_{k \in\mathcal{K}_j}\sum_{p\in \mathcal{P}_k}\left(\tau_k^{p,\pi}(r_k^{p} - g_k^{p,\pi})+\frac{\rho}{2}(r_k^{p} - g_k^{p,\pi})^2\right)\nonumber\\
		& \text{s.t.}
		& & \eqref{eq:linkcap}\label{opt:subr}.
	\end{align}\begin{figure*}[t]
		\begin{align}\textstyle
			&\underbrace{\sum_{j=1}^{J}\sum_{s \in\mathcal{S}_j}\int_{\sum_{k \in\mathcal{K}_s}\sum_{p\in \mathcal{P}_{k}}\hat{r}_{k}^{p}}^{\infty}\hspace{-.4cm}(y-\sum_{k \in\mathcal{K}_s}\sum_{p\in \mathcal{P}_{k}}\hat{r}_{k}^{p})f_{s,S_n}(y)dy}_{\Upsilon_1} - \sum_{j=1}^{J}\sum_{s \in\mathcal{S}_j} (1-F_{s,S_n}(\sum_{k \in\mathcal{K}_s}\sum_{p\in \mathcal{P}_{k}}\hat{r}_{k,s}^p))(\sum_{k \in\mathcal{K}_s}\sum_{p\in \mathcal{P}_{k}}r_{k}^{p}-\sum_{k \in\mathcal{K}_s}\sum_{p\in \mathcal{P}_{k}}\hat{r}_{k}^{p}) \nonumber \\
			&+\sum_{j=1}^{J}\sum_{s \in\mathcal{S}_j}\frac{\sum_{k \in\mathcal{K}_s}|\mathcal{P}_k|}{2}\sum_{k \in\mathcal{K}_s}\sum_{p\in \mathcal{P}_{k}}(r_{k}^{p}-\hat{r}_{k}^{p})^2\underbrace{-\sum_{j=1}^{J}\psi_j\Bigg(\sum_{s \in\mathcal{S}_j}\sum_{k \in\mathcal{K}_s}\mathbb{E}_{d_k}\Bigg[\phi_{k,S_n}\Big(\min(\sum_{p \in \mathcal{P}_k} \hat{r}_k^{p},d_k)\Big)\Bigg]\Bigg)}_{\Upsilon_2}\nonumber\\
			&+\sum_{j=1}^{J}\psi_j\left(\sum_{s \in\mathcal{S}_j}\sum_{k \in\mathcal{K}_s}\left(-\phi'_{k,S_n}\Big(\sum_{p \in \mathcal{P}_k} \hat{r}_k^{p}\Big)\Big(1-F_{k,S_n}(\sum_{p \in \mathcal{P}_k} \hat{r}_k^{p})\right)\left(\sum_{p\in \mathcal{P}_{k}}(r_k^{p}-\hat{r}_k^{p})\right)+\frac{Q}{2}\sum_{s \in\mathcal{S}_j}\sum_{k \in\mathcal{K}_s}\sum_{p\in \mathcal{P}_{k}}(r_k^{p}-\hat{r}_k^{p})^2\right).\label{eq:upper2}
		\end{align}
		\hrulefill
	\end{figure*}\normalsize
	To solve \eqref{opt:subr}, we substitute the quadratic upper-bound \cite[eq. (12)]{hong2015unified} for the expected excessive demand from slice $s$ given in \eqref{eq:iso}, which is convex with a Lipschitz continuous gradient. The Lipschitz constant for the gradient of \eqref{eq:iso} is $\sum_{k \in\mathcal{K}_s}|\mathcal{P}_k|$.  Similarly, we substitute a quadratic upper-bound for \eqref{eq:util1}. If $Q$ is the Lipschitz constant for the gradient of \eqref{eq:util1} ($Q$ can be calculated through differentiation of \eqref{eq:util1} and using extremum values of the first and second derivatives of $\phi_{k,S_n}(\cdot)$), then \eqref{eq:upper2} is an upper-bound for $\Gamma(\mathbf{r})$ around the current iterate $\hat{\mathbf{r}}$. In \eqref{eq:upper2}, $\Upsilon_1$ and $\Upsilon_2$ are constants and we consider $\Upsilon=\Upsilon_1+\Upsilon_2$. We notice that \eqref{eq:upper2} can be written in the following shorter form:
	\begin{align}
		&\sum_{j=1}^{J}\sum_{s \in\mathcal{S}_j}(\frac{\psi_jQ}{2}+\frac{\sum_{k \in\mathcal{K}_s}|\mathcal{P}_k|}{2})\sum_{k \in\mathcal{K}_s}\sum_{p\in \mathcal{P}_{k}}(r_k^{p}-\hat{r}_k^{p})^2\nonumber\\
		&+\sum_{j=1}^{J}\sum_{s \in\mathcal{S}_j}\sum_{k \in\mathcal{K}_s}\sum_{p\in \mathcal{P}_{k}}h_{k}^p(r_k^{p}-\hat{r}_k^{p})+\Upsilon,\label{eq:upsim}
	\end{align}
	where $h_{k}^p$ is a constant coefficient. When \eqref{eq:upsim} is substituted for $\Gamma(\mathbf{r})$ in \eqref{opt:subr}, \eqref{opt:subr} becomes a QP with a separable objective function in $\{r_k^p\}_{p\in\mathcal{P}_k,k=1:K}$. With the quadratic upper-bound \eqref{eq:upsim}, we can solve \eqref{opt:subr} using CPLEX or Gurobi or in a distributed fashion using the approach given in \cite{Reyhanian1,reyhanian2021resource}, which decomposes the problem across backhaul links. We associate one Lagrange multiplier $\mu_l$ to each constraint in \eqref{eq:linkcap} and find the Lagrangian as follows:
	\begin{align}
		&L_c(\mathbf{r},\boldsymbol{\mu})=\sum_{j=1}^{J}\sum_{s \in\mathcal{S}_j}(\frac{\psi_jQ}{2}+\frac{\sum_{k \in\mathcal{K}_s}|\mathcal{P}_k|}{2})\sum_{k \in\mathcal{K}_s}\sum_{p\in \mathcal{P}_{k}}(r_k^{p}-\hat{r}_k^{p})^2\nonumber\\
		&+\sum_{j=1}^{J}\sum_{s \in\mathcal{S}_j}\sum_{k \in\mathcal{K}_s}\sum_{p\in \mathcal{P}_{k}}h_{k}^p(r_k^{p}-\hat{r}_k^{p})+\Upsilon\nonumber\\
		&+\sum_{l\in \mathcal{L}}\mu_l\left(\sum_{j=1}^{J}\sum_{k\in\mathcal{K}_j}\sum_{p:\{p\in \mathcal{P}_k,l\in\mathcal{L}_k^p\}}r_k^{p} - C_l\right).\nonumber
	\end{align}
	We can decompose the Lagrangian across backhaul links and parallelelize the computation across links as $L_c(\mathbf{r},\boldsymbol{\mu})$ is strongly convex in $\mathbf{r}$ and has a Lipschitz continuous gradient \cite[Theorem 1]{reyhanian2021resource}. Each term in the Lagrangian that includes $r_k^{p}$ is decomposed across links that construct path $p$. We denote the Lagrange multiplier in the $m^{\text{th}}$ iteration of the distributed approach by $\widetilde{\mu}_l^{m}$. In the $m^{\text{th}}$ iteration, each link of path $p$ receives a portion of 
	\begin{align}
		\alpha_{k,l}^{p,m}=\frac{\widetilde{\mu}_l^{m-1}}{\sum_{l'\in \mathcal{L}_k^p}\widetilde{\mu}_{l'}^{m-1}},\label{eq:alpha}
	\end{align}
	from those Lagrangian terms that include $r_k^{p}$. The decomposed Lagrangian on a backhaul link is as follows:\allowdisplaybreaks
	\begin{align}
		&L_l(\mathbf{r}_l,\mu_l,\widetilde{\boldsymbol{\mu}}^{m-1})=\sum_{j=1}^{J}\sum_{k \in\mathcal{K}_j}\sum_{p:\{p\in \mathcal{P}_k,l\in\mathcal{L}_k^p\}}\alpha_{k,l}^{p,m}\tau_k^{p,\pi}(r_k^{p} - g_k^{p,\pi})\nonumber\\
		&+\sum_{j=1}^{J}\sum_{k \in\mathcal{K}_j}\sum_{p:\{p\in \mathcal{P}_k,l\in\mathcal{L}_k^p\}}\frac{\rho}{2}\:\alpha_{k,l}^{p,m}(r_k^{p} - g_k^{p,\pi})^2\nonumber\\
		&+\sum_{j=1}^{J}\sum_{s \in\mathcal{S}_j}(\frac{\psi_jQ}{2}+\frac{\sum_{k \in\mathcal{K}_s}|\mathcal{P}_k|}{2})\sum_{k \in\mathcal{K}_s}\sum_{p:\{p\in \mathcal{P}_k,l\in\mathcal{L}_k^p\}}\hspace{-0.7cm}\alpha_{k,l}^{p,m}(r_k^{p}-\hat{r}_k^{p})^2\nonumber\\
		&+\sum_{j=1}^{J}\sum_{k \in\mathcal{K}_j}\sum_{p:\{p\in \mathcal{P}_k,l\in\mathcal{L}_k^p\}}\alpha_{k,l}^{p,m}h_k^p(r_k^{p}-\hat{r}_k^{p})\nonumber\\
		&+\mu_l\left(\sum_{j=1}^{J}\sum_{k\in\mathcal{K}_j}\sum_{p:\{p\in \mathcal{P}_k,l\in\mathcal{L}_k^p\}}r_k^{p} - C_l\right)\nonumber,
	\end{align}
	where $\alpha_{k,l}^{p,m}$ is a constant and $\mathbf{r}_l=\{r_k^{p}\}_{p\in \mathcal{P}_k,l\in\mathcal{L}_k^p,k=1:K}$. 
	The optimal $r_k^{p}$ and $\mu_l$ are obtained from the KKT conditions for the per-link subproblem as follows:
	\begin{subequations}
		\begin{align}
			&\frac{\partial L_l(\mathbf{r}_l,\mu_l,\widetilde{\boldsymbol{\mu}}^{m-1})}{\partial r_k^{p}}=0,r_k^p\geq 0, \mu_l \geq 0,\hspace{0.6cm}p\in \mathcal{P}_k,l\in\mathcal{L}_k^p,\forall k,\label{eq:kkt1}\\
			&
			\mu_l\left(\sum_{k=1}^K\sum_{p:\{p\in \mathcal{P}_k,l\in\mathcal{L}_k^p\}}r_k^{p} - C_l\right)=0,\hspace{1.2cm} \label{eq:kkt2}\\
			&
			\sum_{k=1}^K\sum_{p:\{p\in \mathcal{P}_k,l\in\mathcal{L}_k^p\}}r_k^{p} \leq C_l,\hspace{1.2cm} \label{eq:kkt3}
		\end{align}
	\end{subequations}
	The solution to \eqref{eq:kkt1} can be obtained in closed-form for the $l^{\text{th}}$ backhaul link as follows:
	\begin{align}
		&r_k^{p}\hspace{-0.1cm}=\max\Big(0,\frac{\alpha_{k,l}^{p,m}\big((\psi_jQ+\sum_{k \in\mathcal{K}_s}\hspace{-0.2cm}|\mathcal{P}_k|)\hat{r}_k^{p}+\rho  g_k^{p,\pi}\hspace{-0.2cm}-h_k^p-\tau_k^{p,\pi}\big)\hspace{-0.1cm}-\mu_l}{\alpha_{k,l}^{p,m}(\rho+\psi_jQ+\sum_{k \in\mathcal{K}_s}|\mathcal{P}_k|)}\Big).\label{eq:rsol}
	\end{align}
	To find the optimal $r_k^{p}$ and $\mu_l$ for each per-link subproblem, we implement a bisection search on $\mu_l$ in the positive orthant and obtain the corresponding $r_k^{p}$ from \eqref{eq:rsol} until one $\mu_l$ that satisfies the complementary slackness condition \eqref{eq:kkt2} is obtained. If there is no such $\mu_l$, then we set $\widetilde{\mu}_l^{m}=0$ and $\alpha_{k,l}^{p,m+1}=0$. For these links, we do not need to continue computation as the KKT conditions remain satisfied. In the following iterations, we ignore these links and consider links with $\widetilde{\mu}_l^{m} > 0$. We find $\mu_l$ in parallel for all links. Once each $\mu_l$ is obtained, we set $\widetilde{\mu}_l^m=\mu_l,\widetilde{r}_k^{p,m}=r_k^{p}$ and update $\alpha_{k,l}^{p,m+1}=\widetilde{\mu}_l^{m}/\sum_{l'\in \mathcal{L}_k^p}\widetilde{\mu}_{l'}^{m}$. We alternate between finding $\widetilde{\mu}_l^{m}$ and $\alpha_{k,l}^{p,m+1}$ until each $\widetilde{\mu}_l^{m}$ converges. Upon convergence, we update $\hat{\mathbf{r}}=\widetilde{\mathbf{r}}^m$. We repeat this until $\hat{\mathbf{r}}$ converges. The summary of the proposed approach to solve \eqref{opt:subr} is given in Algorithm \ref{al:parallel}.
	
	\begin{algorithm}[t!]
		0.	\textbf{Initialization} Assign some small positive number to each $\widetilde{\mu}_l^0$; \\
		\For{a few iterations}{
			1. Assign a positive number to each $\widetilde{\mu}_l^0$, $m=0$\;		
			\Repeat{all $\widetilde{\mu}_l^m$ variables converge}{
				\For{all links}{ 
					2. Find  $\alpha_{k,l}^{p,m+1}=\widetilde{\mu}_l^m/\sum_{l'\in \mathcal{L}_k^p}\widetilde{\mu}_{l'}^m$\;
					\If{$\widetilde{\mu}_l^{m} >0$}{3. Find $\widetilde{\mu}_l^{m+1}$ from \eqref{eq:kkt1} and \eqref{eq:kkt2}\;
					}
					4. $m=m+1$\;
				}
			}     
			\For{all $r_k^{p}$ variables}{5. Find $\widetilde{r}_k^{p,m}$ from \eqref{eq:rsol} for a per-link subproblem, where $l\in \mathcal{L}_k^p$ and $\widetilde{\mu}_l^m>0$\;}
			6. $\hat{\mathbf{r}}=\widetilde{\mathbf{r}}^m$\;
		}
		\caption{The proposed approach to solve \eqref{opt:subr}}
		\label{al:parallel}
	\end{algorithm}
	
	We do successive upper-bound minimization \cite{razaviyayn2013unified} and solve \eqref{opt:subr} for a few iterations with Algorithm \ref{al:parallel} and update $\hat{\mathbf{r}}$ in each iteration. Then, we update $\mathbf{r}^{\pi+1}=\hat{\mathbf{r}}$. As the quadratic upper-bound \eqref{eq:upper2} satisfies the four convergence conditions for block successive upper-bound minimization methods \cite[Assumption 2]{razaviyayn2013unified}, the successive upper-bound minimization converges to the KKT solution of \eqref{opt:subr}.

	\subsection{Subproblem with respect to $\mathbf{g}$ and $\mathbf{t}$}\label{sec:gt}
	
	The $\ell_2$-norms $\norm{\mathbf{g}_s-\underline{\mathbf{r}}_s^{n-1}}_2$ and $\norm{\mathbf{t}_s-\underline{\mathbf{t}}_s^{n-1}}_2$ are not smooth. Instead of these two, we consider the following smooth approximation with a Lipschitz continuous gradient:
	\begin{align}
		&U(\mathbf{g}_s,\mathbf{t}_s)=-2\delta c_r(a_s^{1,i}+a_s^{2,i})\nonumber\\
		&+c_ra_s^{1,i}\sqrt{\sum_{l \in \mathcal{L}}(\sum_{k\in\mathcal{K}_s}\sum_{p:\{p\in \mathcal{P}_k,l\in\mathcal{L}_k^p\}}\hspace{-.6cm}g_k^{p}-\underline{r}_s^{l,n-1 })^2+\delta^2}\nonumber\\
		&+c_ra_s^{2,i}\sqrt{\sum_{b \in \mathcal{B}}(\sum_{k \in\mathcal{K}_s}\sum_{p:\{p\in \mathcal{P}_k,b \in \mathcal{U}_k^p\}}\hspace{-.6cm}t_k^{p}-\underline{t}_s^{b,n-1})^2+\delta^2},\nonumber
	\end{align} \normalsize
	where $\delta$ is a small positive number. We note that the minima of $c_r(a_s^{1,i}\norm{\mathbf{g}_s-\underline{\mathbf{r}}_s^{n-1}}_2+a_s^{2,i}\norm{\mathbf{t}_s-\underline{\mathbf{t}}_s^{n-1}}_2)$ is identical to that of $U(\mathbf{g}_s,\mathbf{t}_s)$, which is $\{\mathbf{g}_s,\mathbf{t}_s\}=\{\underline{\mathbf{r}}_s^{n-1},\underline{\mathbf{t}}_s^{n-1}\}$.
	The decomposed, per slice subproblem with respect to $\mathbf{g}_s$ and $\mathbf{t}_s$ becomes:
	\allowdisplaybreaks
	\begin{subequations}\label{opt:g}
		\begin{align}
			& \hspace{-.1cm}\underset{\mathbf{g}_s,\mathbf{t}_s\geq \mathbf{0}}{\text{min}}
			& & \hspace{-.2cm}U(\mathbf{g}_s,\mathbf{t}_s)+(\mathbf{r}_s^{\pi+1} - \mathbf{g}_s)'\boldsymbol{\tau}_s^{\pi}+\frac{\rho}{2}\norm{\mathbf{r}_s^{\pi+1} - \mathbf{g}_s}_2^2\label{opt:g1}\\
			& \hspace{.2cm}\text{s.t.}& &\hspace{-.5cm}\int_{0}^{g_k^p}\hspace{-.2cm}z_{k,S_n}^p(v_k^p,t_k^{p})\:(g_k^{p}-v_k^p)dv_k^p\leq \beta_sg_k^{p}, p\in \mathcal{P}_k, k\in \mathcal{K}_s,\label{opt:g2}\\
			&&&\hspace{-.4cm}\eqref{eq:nodecap},\eqref{eq:slicereq1},\eqref{eq:slicereq2}.
		\end{align}
	\end{subequations}
	In the above problem, $\mathbf{g}_s=\{g_s^l\}_{l\in\mathcal{L}}$ and $\mathbf{t}_s=\{t_s^b\}_{b \in \mathcal{B}}$. We can rewrite \eqref{opt:g1} as a function of $\{g_k^p,t_k^p\}$ if we substitute $\{g_k^p\}_{p\in \mathcal{P}_k,k\in\mathcal{K}_s}$ for $\mathbf{g}_s$ and substitute $\{t_k^p\}_{p\in \mathcal{P}_k,k\in\mathcal{K}_s}$ for $\mathbf{t}_s$. The above problem is not separable in $\mathbf{g}_s$ and $\mathbf{t}_s$ due to \eqref{opt:g2}. To alleviate the complexity of solving the above problem, we decouple \eqref{opt:g2} from \eqref{opt:g} by copying optimization variables and using ADMM. We add the constraints $\mathbf{o}=\mathbf{g}$ and $\mathbf{f}=\mathbf{t}$. We dualize these two constraints and find the augmented Lagrangian. In the $m+1^{\text{th}}$ iteration of the inner ADMM, the subproblem with respect to $\mathbf{g}_s$ and $\mathbf{t}_s$ becomes:
	\begin{align}
		& \underset{\mathbf{g}_s,\mathbf{t}_s}{\text{min}}
		& & \hspace{-.1cm}(\mathbf{r}_s^{\pi+1} - \mathbf{g}_s)'\boldsymbol{\tau}_s^\pi+\frac{\rho}{2}\norm{\mathbf{r}_s^{\pi+1}- \mathbf{g}_s}_2^2+(\mathbf{o}_s^m - \mathbf{g}_s)'\boldsymbol{\iota}_s^m\nonumber\\
		&&&\hspace{-0.2cm}+\frac{\rho_{11}}{2}\norm{\mathbf{o}_s^m - \mathbf{g}_s}_2^2+(\mathbf{f}_s^m - \mathbf{t}_s)'\boldsymbol{\lambda}_s^m+\frac{\rho_{12}}{2}\norm{\mathbf{f}_s^m - \mathbf{t}_s}_2^2\nonumber\\
		&&&+U(\mathbf{g}_s,\mathbf{t}_s)\nonumber\\
		& \text{s.t.}& &\hspace{-.cm}\eqref{eq:nodecap},\eqref{eq:slicereq1},\eqref{eq:slicereq2},\mathbf{g}_s,\mathbf{t}_s\geq \mathbf{0}.\label{opt:gg}
	\end{align}\normalsize
	\begin{figure}[t!]
		\centering
		\includegraphics[width=.35\textwidth]{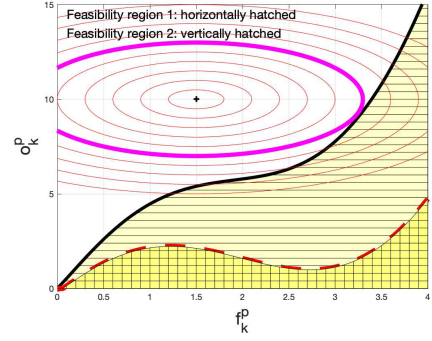}
		\caption{The feasibility region and level sets.}\label{fig:quasiconvex}
	\end{figure} 
	\hspace{-.15cm}Problem \eqref{opt:gg} is separable and strongly convex in $\mathbf{g}_s$ and $\mathbf{t}_s$.
	To separately optimize with respect to $\mathbf{g}_s$ and $\mathbf{t}_s$, we can use the proximal gradient descent algorithm or existing solvers. 
	Then, we update $\{\mathbf{g}_s^{m+1},\mathbf{t}_s^{m+1}\}$. Therefore, the subproblem with respect to $\mathbf{o}_s$ and $\mathbf{f}_s$ is
	\begin{align}
		& \underset{\mathbf{o}_s,\mathbf{f}_s\geq \mathbf{0}}{\text{min}}
		& & \hspace{-.2cm}
		(\mathbf{f}_s - \mathbf{t}_s^{m+1})'\boldsymbol{\lambda}_s^{m}+\frac{\rho_{12}}{2}\norm{\mathbf{f}_s - \mathbf{t}_s^{m+1}}_2^2\label{opt:tt}\\
		&&&\hspace{-1cm}+(\mathbf{o}_s - \mathbf{g}_s^{m+1})'\boldsymbol{\iota}_s^{m}+\frac{\rho_{11}}{2}\norm{\mathbf{o}_s - \mathbf{g}_s^{m+1}}_2^2 \hspace{.6cm} \text{s.t.}\hspace{0.6cm}\eqref{opt:g2}.\nonumber
	\end{align}\normalsize
	We observe that \eqref{opt:tt} is decomposable across paths, and we find a separate subproblem for each $(o_k^p,f_k^p)$ pair. For each $o_k^p$, we can find the lowest $f_k^p$ that ensures \eqref{opt:g2}. Then, a feasible set for optimization variables is characterized (e.g., feasibility region $1$ in Fig. \ref{fig:quasiconvex}). Since the objective function of \eqref{opt:tt} is strongly convex, one can find the global solutions with respect to $\mathbf{o}_s$ and $\mathbf{f}_s$ by setting the gradient of the objective function to zero and projecting the obtained solutions to the positive orthant. If the obtained solutions satisfy \eqref{opt:g2}, then problem \eqref{opt:tt} is solved. If the minimizers of the objective function of \eqref{opt:tt} do not satisfy \eqref{opt:g2}, then we solve the problem using a bisection search. The bisection search works on level sets of the objective function of \eqref{opt:tt} (e.g., Fig. \ref{fig:quasiconvex}). For each level set considered by the bisection, we determine whether or not the considered level set intersects with the characterized feasibility set. This can be done by a simple search as follows. We consider the rightmost point and the bottommost point of the level set. With a sufficiently small constant step-size, we move to the right on the level set from the bottommost point. In each point, we measure the vertical distance between the level set and the border of the feasibility region. If the distance becomes zero in one point, then the level set intersects with the feasibility region. If the vertical distance variation is not monotone, there is a chance of intersection. Using the bisection search on the level sets, we find the level set that is tangent to the feasibility region. The obtained point of the feasibility region, which is touched by the tangent level set, is the minimizer of \eqref{opt:tt}. The summary of the proposed approach to solve \eqref{opt:tt} is given in Algorithm \ref{al:g}. 
	\begin{algorithm}[t!]
		0.	\textbf{Initialization} $A_1=0$, $A_2=\text{a large number}$\\
		\While{$|A_1-A_2|$ is not small enough}
		{1. $A_3=(A_1+A_2)/2$;\\
			2. Find the level sets of the objective function of \eqref{opt:tt} that correspond to $A_1, A_2$ and $A_3$;\\
			\eIf{the level set corresponding to $A_3$ intersects with the feasibility region}{3. $A_2\leftarrow A_3$;}{4. $A_1\leftarrow A_3$;}
		}
		5. Return the feasible point touched by the tangent level set;
		\caption{The proposed approach to solve \eqref{opt:tt}}
		\label{al:g}
	\end{algorithm}
	\begin{algorithm}[t!]
		0.	\textbf{Initialization} $\boldsymbol{\iota}^{0}=\mathbf{0}$, $\boldsymbol{\lambda}^{0}=\mathbf{0}$, $m=0$; \\
		\Repeat{the primal and dual residuals are small enough}{
			1. Use proximal gradient descent to solve \eqref{opt:gg}\;
			2. Use Algorithm \ref{al:g} to solve \eqref{opt:tt}\;
			3. $\boldsymbol{\iota}^{m+1}=\boldsymbol{\iota}^{m}+\rho_{11}(\mathbf{o}^{m+1} - \mathbf{g}^{m+1})$ and $\boldsymbol{\lambda}^{m+1}=\boldsymbol{\lambda}^{m}+\rho_{12}(\mathbf{f}^{m+1} - \mathbf{t}^{m+1})$\;
			4. $m=m+1$\;
		}
		\caption{The ADMM approach to solve \eqref{opt:g}}
		\label{al:gt}
	\end{algorithm}
	
	Once we obtain the optimal $\{\mathbf{o}_s,\mathbf{f}_s\}$, we update $\{\mathbf{o}_s^{m+1},\mathbf{f}_s^{m+1}\}$. We alternatively minimize with respect to $\{\mathbf{g}_s,\mathbf{t}_s\}$ and $\{\mathbf{o}_s,\mathbf{f}_s\}$. Then, we update Lagrange multipliers as $\boldsymbol{\iota}^{m+1}=\boldsymbol{\iota}^{m}+\rho_{11}(\mathbf{o}^{m+1} - \mathbf{g}^{m+1})$ and $\boldsymbol{\lambda}^{m+1}=\boldsymbol{\lambda}^{m}+\rho_{12}(\mathbf{f}^{m+1} - \mathbf{t}^{m+1})$. The summary of the proposed approach to solve \eqref{opt:g} is given in Algorithm \ref{al:gt}.  After Algorithm \ref{al:gt} converges (the primal and dual residuals are small enough, see \cite[p. 15--18]{Boyd_2010}), we use obtained $\mathbf{g}$ to update $\mathbf{g}^{\pi+1}$.
	
	\begin{proposition}\label{prop:tir1}
		Algorithm \ref{al:g} converges to the global stationary solution of the problem in \eqref{opt:tt} when the expected outage is non-increasing in the transmission resource.
	\end{proposition}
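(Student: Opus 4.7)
The plan is to exploit the per-path separability of \eqref{opt:tt}: since both the strongly convex quadratic objective and the constraint \eqref{opt:g2} decompose across pairs $(o_k^p,f_k^p)$, it suffices to prove that the level-set bisection finds the global minimum of every two-dimensional subproblem. For a single pair, the objective is a strictly convex quadratic, so its level sets form a nested family of ellipses centered at the unconstrained minimizer, and the optimal value of the constrained problem is the smallest level at which the corresponding ellipse meets the feasibility set.

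Next, I would characterize the per-path feasibility set under the stated hypothesis. Letting
\[
H(o,f)=\int_0^{o}z_{k,S_n}^p(v,f)(o-v)\,dv-\beta_s o,
\]
differentiation in $o$ gives $\partial_o H = Z_{k,S_n}^p(o,f)-\beta_s$, so $H(\cdot,f)$ is convex in $o$; by hypothesis the expected outage is non-increasing in the transmission resource, so $H(o,\cdot)$ is non-increasing in $f$. Consequently $\{(o,f)\ge 0:H(o,f)\le 0\}$ is an upper-contour set in the $f$ direction: for each admissible $o$ there exists a threshold $f^{\star}(o)$ such that $(o,f)$ is feasible iff $f\ge f^{\star}(o)$, and $f^{\star}$ is continuous in $o$. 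This is exactly the quantity that Algorithm \ref{al:g} implicitly uses when it computes, for each $o_k^p$, the lowest $f_k^p$ that ensures \eqref{opt:g2}. Because the objective is strictly convex, the set of level values $A$ whose level ellipse meets this epigraph is an upper ray $[A^{\star},\infty)$, and $A^{\star}$ equals the optimum of \eqref{opt:tt}. The monotone bisection in Algorithm \ref{al:g} therefore contracts $[A_1,A_2]$ to $A^{\star}$; at the limiting level the ellipse is externally tangent to $\{(o,f^{\star}(o))\}$ at a single point, which by strict convexity is the unique global minimizer.

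The main obstacle is rigorously justifying the geometric intersection test used inside Algorithm \ref{al:g}. One must argue that sweeping from the bottommost point of a candidate ellipse rightward and monitoring the vertical gap to $f^{\star}(\cdot)$ detects every intersection in finitely many steps. The key observation is that the lower arc of the ellipse is a concave function of $o$, while $f^{\star}(\cdot)$ inherits from the hypothesis the continuity and epigraph property established above; consequently the vertical gap between the lower arc and $f^{\star}$ is a continuous function of $o$ with only finitely many sign changes, and a sweep with a sufficiently small step size flags a root whenever one exists and is strictly positive otherwise. Combining the nested-level-set property with the correctness of this intersection test shows that Algorithm \ref{al:g} converges to the unique global minimizer of every per-path subproblem, and hence to the global stationary solution of \eqref{opt:tt}.
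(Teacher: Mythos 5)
Your argument is essentially the paper's own proof: both reduce \eqref{opt:tt} to per-path two-dimensional subproblems, use the non-increasing-outage hypothesis to show the feasibility set is the epigraph of a single-valued threshold function (the paper proves the "region cannot narrow in $t$" claim by contradiction, you derive it directly from monotonicity of $H(o,\cdot)$), and then invoke the nested level sets of the strongly convex objective to conclude that the bisection isolates the unique tangent level set and hence the global minimizer. Your added remarks---the computation $\partial_o H = Z_{k,S_n}^p(o,f)-\beta_s$ and the explicit discussion of why the rightward-sweep intersection test is sound---are refinements of detail rather than a different route, and the proof is correct to the same standard as the paper's.
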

	\begin{proof} 
		Consider that we denote the expected outage in (8) by $\varphi_1(r,t)$. Based on the assumption in Proposition \ref{prop:tir1}, $\varphi_1(r,t)$ is non-increasing in $t$. We claim that the feasibility region characterized by (8) cannot narrow when $t$ increases (e.g., feasibility region $2$ in Fig. \ref{fig:quasiconvex}). We prove this claim by contradiction. Consider $t_1<t_2$ and the feasibility region narrows as $t$ increases, where $(r,t_1)$ belongs to the feasibility region while $(r,t_2)$ does not belong to the feasibility region. Then, we have $\varphi_1(r,t_1)\leq\beta_s r$ and $\varphi_1(r,t_2)>\beta_s r$. Based on this, we should have $\varphi_1(r,t_1)<\varphi_1(r,t_2)$. This is a contradiction as we assumed $\varphi_1(r,t)$ is non-increasing in the transmission resource. Therefore, the feasibility region cannot narrow as $t$ increases similar to feasibility region $1$ in Fig. \ref{fig:quasiconvex}. Consider that the border line that specifies the feasibility region is characterized by the following equation $\varphi_2(r)=t$, where $\varphi_2(r)$ is non-decreasing. Then, the feasibility region of (8) can be represented as $\varphi_2(r)\leq t$. We consider the following optimization problem:
		\begin{align}
			& \underset{r,t\geq 0 }{\text{min}}
			& & \hspace{-.cm}\varphi_3(r)+\varphi_4(t)\nonumber\\
			& \text{s.t.}& &\hspace{-.cm}\varphi_2(r)\leq t,\nonumber
		\end{align}\normalsize
		where 
		$\varphi_3(r)$ an $\varphi_4(t)$ are strongly convex functions with global minimas $r^\star$ and $t^\star$, respectively. The above problem is not always convex as $\varphi_2(r)$ is not always convex.

		We can find the level sets of the objective function from $\varphi_3(r)+\varphi_4(t)=\kappa$, where we change $\kappa$. The solution of the above problem is a point in the feasible set that gives the least objective function value. If the minimizer of the objective function lies in the feasible set, then that is the problem solution. Otherwise, the problem solution lies on the level set that is tangent to the feasible set and is exactly the point on the feasible set border line that is touched by the level set. As the gradient of each point on a level set is different from the gradient of the other points, only one point of the feasible set is touched by the tangent level set. Algorithm 4 can find the tangent level set using a bisection search since there is only one tangent level set, and the rest of level sets are either intersecting with the feasibility region or do not intersect. In each iteration, we quickly evaluate whether or not three considered level sets by the bisection method are intersecting the feasible set or not. Since the objective function of (49) is strongly convex, the obtained solution is unique and
		Algorithm 4 solves the problem globally.
	\end{proof}
	\begin{remark}
		If \eqref{eq:rewritte} is considered instead of \eqref{eq:out1}, then we rewrite \eqref{opt:g2} as
		\begin{align}
			\int_{0}^{g_k^w}z_{k,S_n}^p(v_k^w,t_k^{w})\:(g_k^{w}-v_k^w)dv_k^w\leq \beta_sg_k^{w}, k\in \mathcal{K}_s,\label{eq:pan}
		\end{align}
		where $g_k^w=\sum_{p:\{p\in \mathcal{P}_k,w\in p\}}g_k^{p}$. To solve \eqref{opt:tt} with \eqref{eq:pan}, we dualize $g_k^w=\sum_{p:\{p\in \mathcal{P}_k,w\in p\}}g_k^{p}$ and find the augmented Lagrangian. We can deploy an ADMM algorithm to solve the problem. We use Algorithm \ref{al:g} to solve the subproblem with respect to $\{\{g_k^w\}_{w\in\mathcal{W}_k,k=1:K},\{t_k^w\}_{w\in\mathcal{W}_k,k=1:K}\}$. Moreover, the augmented Lagrangian minimizers with respect to $\{g_k^p\}_{p\in\mathcal{P}_k,k=1:K}$ can be obtained in closed-form.
		\begin{algorithm}[t!]
			0.	\textbf{Initialization} $\boldsymbol{\tau}^{0}=\mathbf{0}$, $\pi=0$, assign values to $a_s^{1,0}$ and $a_s^{2,0}$, $i=1$; \\
			\Repeat{both $a_s^{1,i}$ and $a_s^{2,i}$ converge}{
				1. Find $a_s^{1,i}$ and $a_s^{2,i}$ from \eqref{eq:updatelasso}\;
				\Repeat{the primal and dual residuals are small enough and $\hat{\mathbf{r}}$ converges}{
					2. Apply Algorithm \ref{al:parallel} to obtain $\mathbf{r}^{\pi+1}$\;
					3. Apply Algorithm \ref{al:gt} to obtain $\mathbf{g}^{\pi+1}$\;
					4. $\boldsymbol{\tau}^{\pi+1}=\boldsymbol{\tau}^\pi+\rho(\mathbf{r}^{\pi+1}-\mathbf{g}^{\pi+1})$\;
					5. $\pi=\pi+1$\;
				}
				6. $i=i+1$\;
			}
			7.  $\underline{\mathbf{r}}_s^{n}=\mathbf{r}_s^{\pi}$ and $\underline{\mathbf{t}}_s^{n}=\mathbf{t}_s^{\pi}$
			\caption{The ADMM algorithm to solve \eqref{opt:slice1}}
			\label{al:admm}
		\end{algorithm}
	\end{remark}
	\begin{proposition}\label{prop:tir2}
		Algorithm \ref{al:gt} converges to the global stationary solution to the problem in \eqref{opt:g}.
	\end{proposition}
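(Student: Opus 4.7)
The plan is to view Algorithm \ref{al:gt} as a two-block ADMM in which (i) the $\{\mathbf{g}_s,\mathbf{t}_s\}$-subproblem \eqref{opt:gg} is strongly convex with a Lipschitz-continuous gradient (a property the paper establishes for $U$ plus the quadratic penalty terms), and (ii) the $\{\mathbf{o}_s,\mathbf{f}_s\}$-subproblem \eqref{opt:tt} is solved to global optimality by Algorithm \ref{al:g} thanks to Proposition \ref{prop:tir1}. Hence, both blocks return globally optimal updates, even though the overall problem \eqref{opt:g} is non-convex solely because of constraint \eqref{opt:g2}.

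First, I would verify that the augmented Lagrangian of \eqref{opt:g} is bounded below on the feasible set. This follows from the non-negativity of the quadratic penalties, the fact that $U(\mathbf{g}_s,\mathbf{t}_s)$ is bounded below by construction (it is a shifted, smoothed sum of norms), and the compactness of the primal feasible region induced by constraints \eqref{eq:linkcap}, \eqref{eq:nodecap}, \eqref{eq:slicereq1}, \eqref{eq:slicereq2}. Next, I would establish a sufficient-decrease property of the augmented Lagrangian across each ADMM sweep: by strong convexity of block~1 and global optimality of block~2, each primal minimization strictly decreases the Lagrangian by at least a quadratic amount in the iterate gap, while the dual ascent step increases it by an amount that can be bounded using the Lipschitz constant of the gradient of the smooth part and the penalty parameters $\rho_{11},\rho_{12}$. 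Choosing $\rho_{11}$ and $\rho_{12}$ sufficiently large (larger than the Lipschitz constants of the gradient of $U$ restricted to the respective blocks) guarantees that the net change of the augmented Lagrangian is non-positive.

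Having monotone decrease and a lower bound, the standard Lyapunov argument (in the spirit of Hong--Luo--Razaviyayn and Wang--Yin--Zeng for non-convex ADMM) implies that the successive differences of the primal and dual iterates are square-summable, so the primal residuals $\mathbf{o}-\mathbf{g}$ and $\mathbf{f}-\mathbf{t}$ vanish in the limit. Combined with boundedness of the iterates, one extracts a convergent subsequence whose limit point $(\mathbf{g}_s^\star,\mathbf{t}_s^\star,\mathbf{o}_s^\star,\mathbf{f}_s^\star,\boldsymbol{\iota}^\star,\boldsymbol{\lambda}^\star)$ satisfies $\mathbf{o}_s^\star=\mathbf{g}_s^\star$ and $\mathbf{f}_s^\star=\mathbf{t}_s^\star$, and whose primal components satisfy the KKT conditions of the corresponding subproblems. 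Because block~2 is solved globally by Algorithm~\ref{al:g}, the limit point is in fact a global stationary point of \eqref{opt:g}, completing the argument.

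The main obstacle will be the non-convexity of \eqref{opt:g2}: in a standard convex-ADMM proof, sufficient decrease across the second block is immediate from convex optimality conditions, but here it must be argued from the \emph{global} optimality delivered by Algorithm~\ref{al:g} rather than from convexity. Once this is in hand, the remaining bookkeeping (choosing $\rho_{11},\rho_{12}$ large enough, verifying boundedness, and passing to the limit in the KKT conditions) follows standard non-convex ADMM templates and does not require new ideas.
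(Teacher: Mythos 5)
Your route is genuinely different from the paper's, and considerably more ambitious. The paper's entire proof is two sentences: the $\{\mathbf{g}_s,\mathbf{t}_s\}$-subproblem \eqref{opt:gg} is strongly convex with a unique global minimizer, the $\{\mathbf{o}_s,\mathbf{f}_s\}$-subproblem \eqref{opt:tt} is solved globally by Algorithm \ref{al:g} via Proposition \ref{prop:tir1}, and therefore the limit produced by Algorithm \ref{al:gt} is global stationary (with a citation to Bertsekas). You instead attempt a full non-convex ADMM Lyapunov analysis in the style of Hong--Luo--Razaviyayn and Wang--Yin--Zeng; if it went through, it would strictly strengthen the paper's argument by actually proving that the residuals vanish and that limit points satisfy the KKT system of \eqref{opt:g}, rather than asserting this from per-block global solvability.

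Two steps in your plan, however, do not go through as described. First, the sufficient-decrease estimate for the second block: when $(\mathbf{o}_s,\mathbf{f}_s)$ is minimized globally over the non-convex set cut out by \eqref{opt:g2}, strong convexity of the objective only yields $f(x^{+})\le f(x)$ for the previous feasible iterate $x$; the quadratic decrease $f(x)-f(x^{+})\ge \tfrac{\mu}{2}\lVert x-x^{+}\rVert^{2}$ additionally needs the variational inequality $\langle\nabla f(x^{+}),\,x-x^{+}\rangle\ge 0$, which is a consequence of convexity of the feasible set and can fail here. You flag this obstacle yourself but do not resolve it. Second, and unflagged: absorbing the dual ascent into the primal decrease requires bounding $\lVert\boldsymbol{\iota}^{m+1}-\boldsymbol{\iota}^{m}\rVert$ and $\lVert\boldsymbol{\lambda}^{m+1}-\boldsymbol{\lambda}^{m}\rVert$ by primal successive differences, which in the standard templates is done by reading the dual variable off the unconstrained first-order optimality condition of the \emph{last-updated} block. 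In Algorithm \ref{al:gt} the last-updated block is $\{\mathbf{o}_s,\mathbf{f}_s\}$, which carries the non-convex constraint \eqref{opt:g2}; its optimality condition involves unknown multipliers for that constraint, so the dual variable is not expressible as the gradient of a smooth function of the iterate and the standard bound is unavailable, no matter how large $\rho_{11},\rho_{12}$ are chosen. Without these two estimates the monotone-decrease/lower-bound argument, and hence the vanishing of the primal residuals, is not established. To be fair, the paper's own proof does not establish these points either; it passes directly from global solvability of each block to global stationarity of the limit, so your attempt exposes rather than creates the difficulty.
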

	\begin{proof}
		The first subproblem of Algorithm 5, given in (48), is strongly convex and has a unique global optimal solution. Furthermore, we proved in Proposition \ref{prop:tir1} that the unique global solution is obtained by Algorithm 4 for (49). Therefore, the obtained solution by Algorithm 5 is global stationary \cite[p. 698]{bertsekas1999nonlinear}.
	\end{proof}

	\subsection{The proposed ADMM Algorithm to solve \eqref{opt:slice1}}
	In the proposed ADMM algorithm, we alternatively optimize with respect to $\mathbf{r}$ as one block and jointly  with respect to $\{\mathbf{g},\mathbf{t}\}$ as the other block. In the proposed ADMM algorithm, $\mathbf{r}$ is updated through a block successive upper-bound minimization in Section \ref{sec:r}, while $\{\mathbf{g},\mathbf{t}\}$ is updated through an inner ADMM in Section \ref{sec:gt}. In the $\pi+1^{\text{th}}$ iteration of the ADMM algorithm, we update the Lagrange multipliers  as $\boldsymbol{\tau}^{\pi+1}=\boldsymbol{\tau}^\pi+\rho(\mathbf{r}^{\pi+1}-\mathbf{g}^{\pi+1})$. We continue alternating between optimization with respect to $\mathbf{r}$ and $\{\mathbf{g},\mathbf{t}\}$ until both variable blocks and $\hat{\mathbf{r}}$ converge. We solve each subproblem of the group LASSO regularization using the proposed ADMM algorithm. We then update $a_s^{1,i+1}$ and $a_s^{2,i+1}$ using \eqref{eq:updatelasso} and solve again. We continue until both $a_s^{1,i}$ and $a_s^{2,i}$ converge.  The summary of the proposed approach to solve \eqref{opt:slice1} is given in Algorithm \ref{al:admm}. 
		\begin{figure*}[t!]
		\centering
		\begin{subfigure}{0.32\textwidth}
			\includegraphics[width=\textwidth]{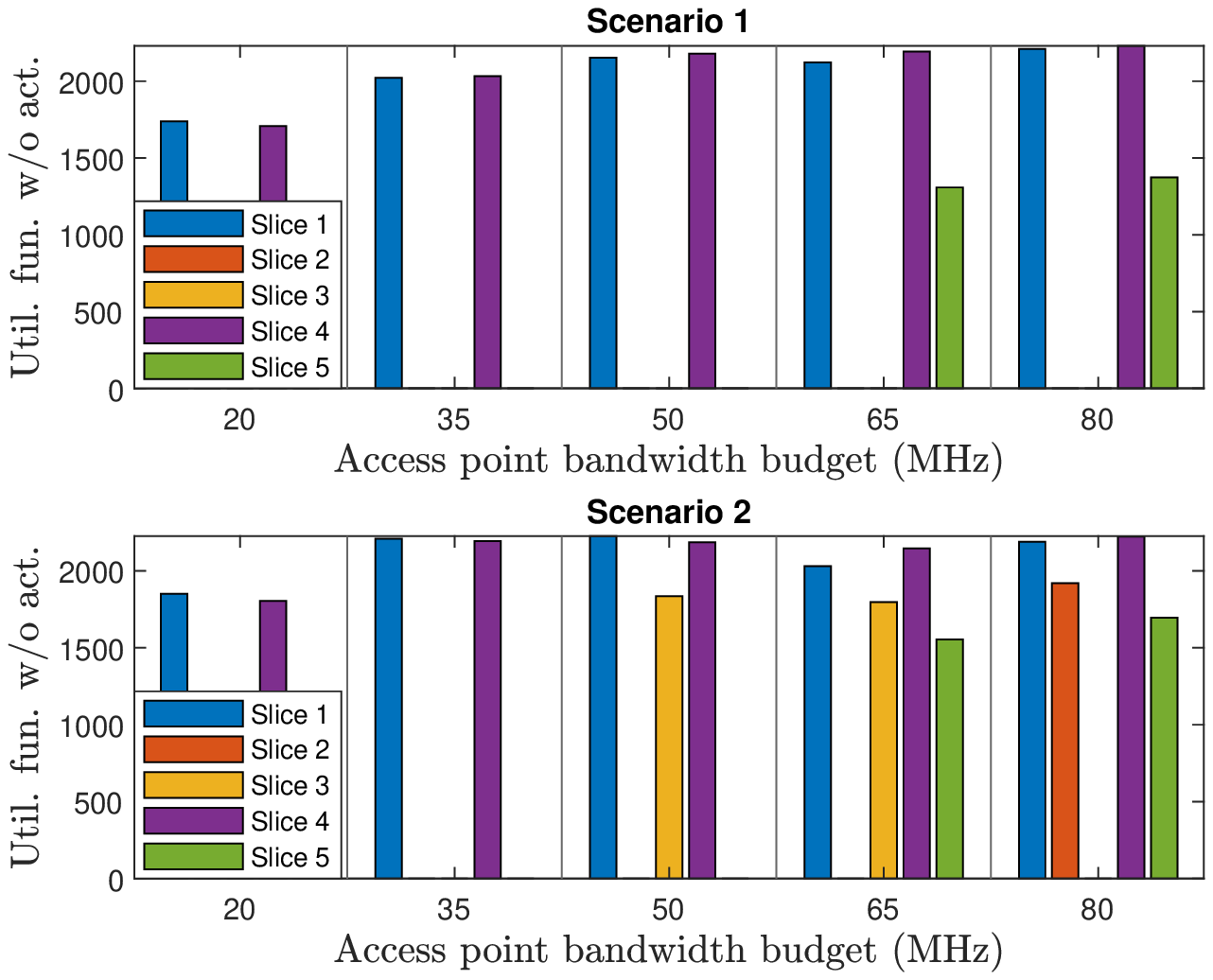}
			\caption{}
			\label{fig:act_obj}
		\end{subfigure}
		\begin{subfigure}{0.32\textwidth}
			\includegraphics[width=\textwidth]{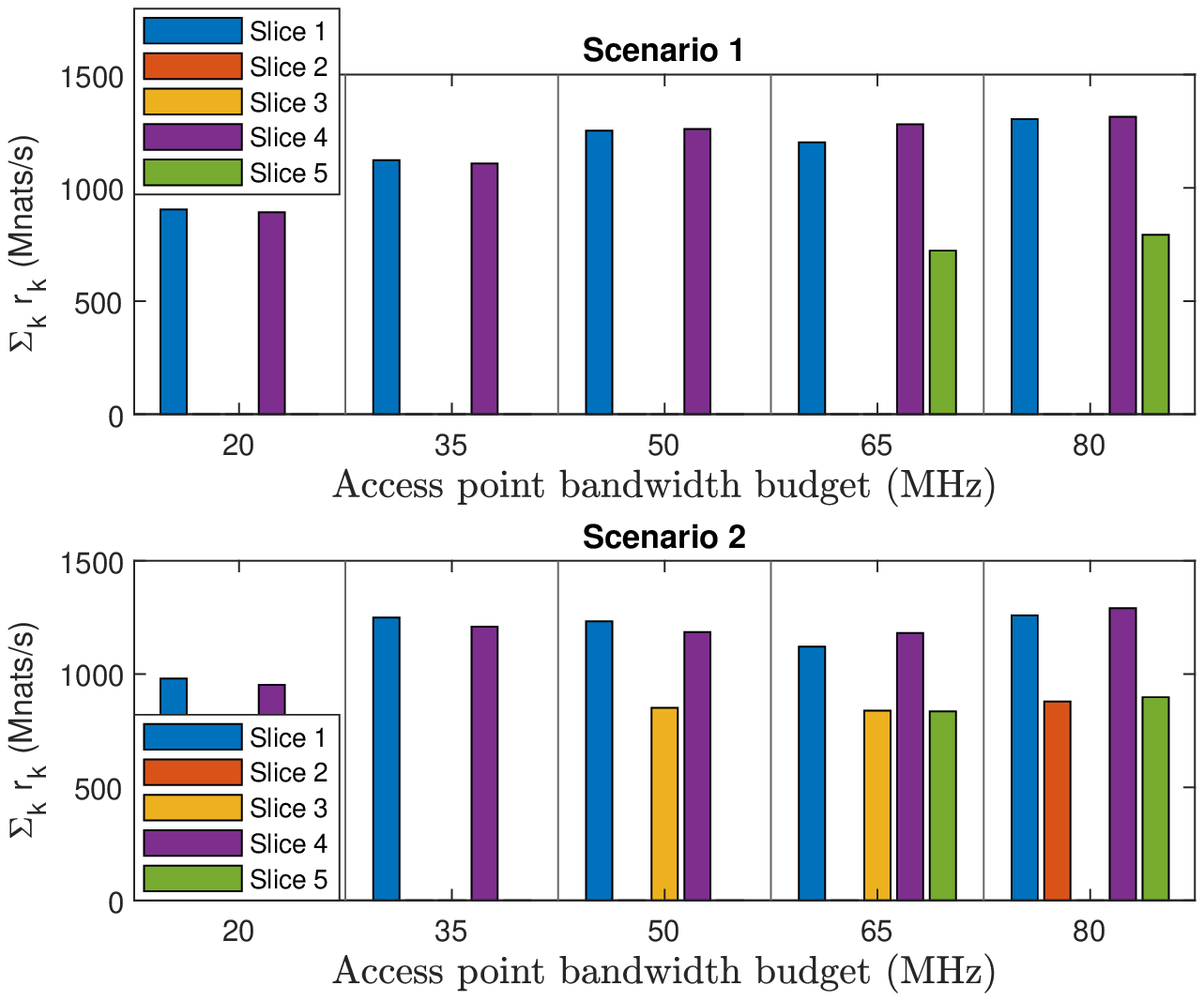}
			\caption{}
			\label{fig:act_rate}
		\end{subfigure}\\
		\begin{subfigure}{0.32\textwidth}
			\includegraphics[width=\textwidth]{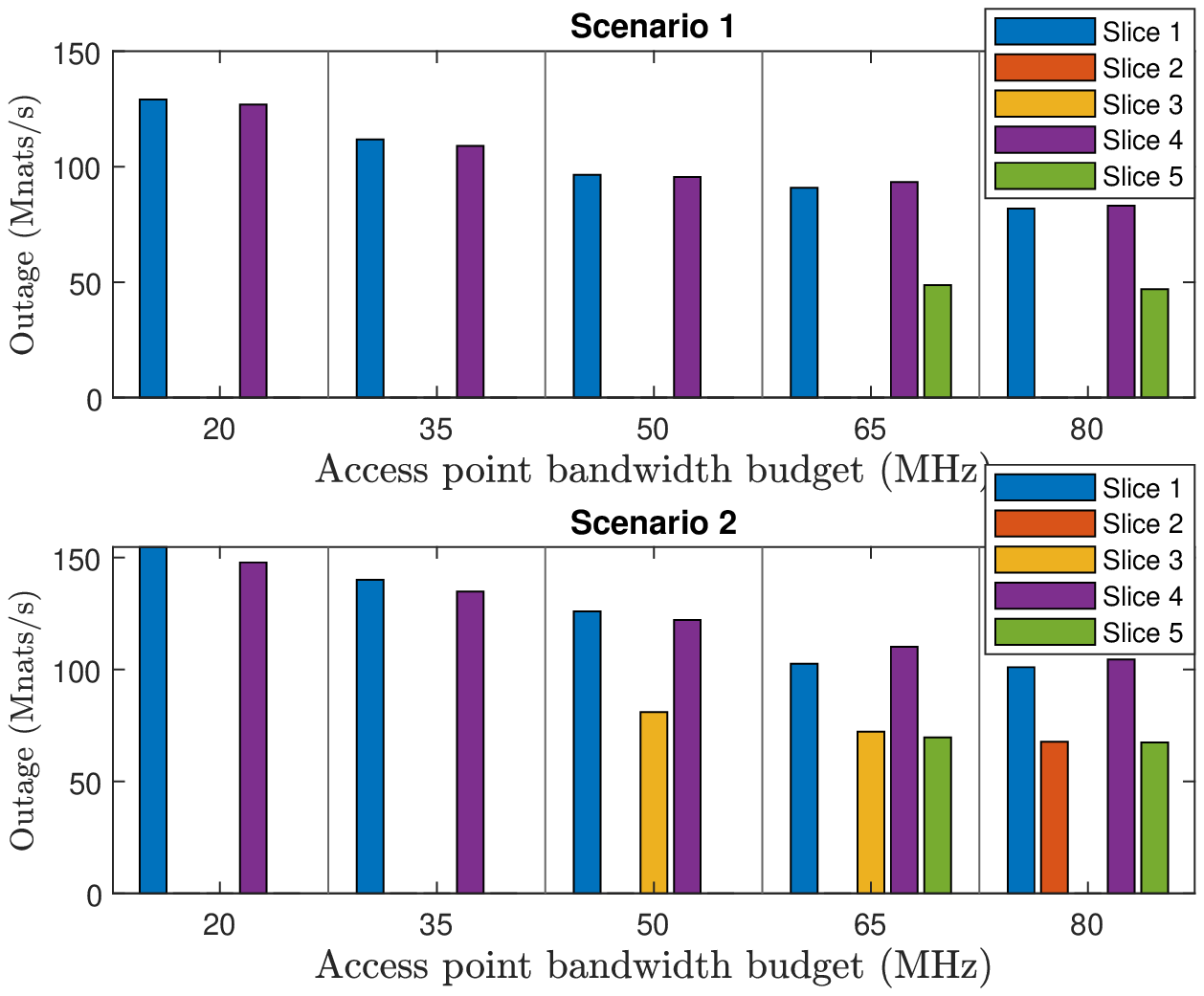}
			\caption{}
			\label{fig:act_outage}
		\end{subfigure}
		\begin{subfigure}{0.32\textwidth}
			\includegraphics[width=\textwidth]{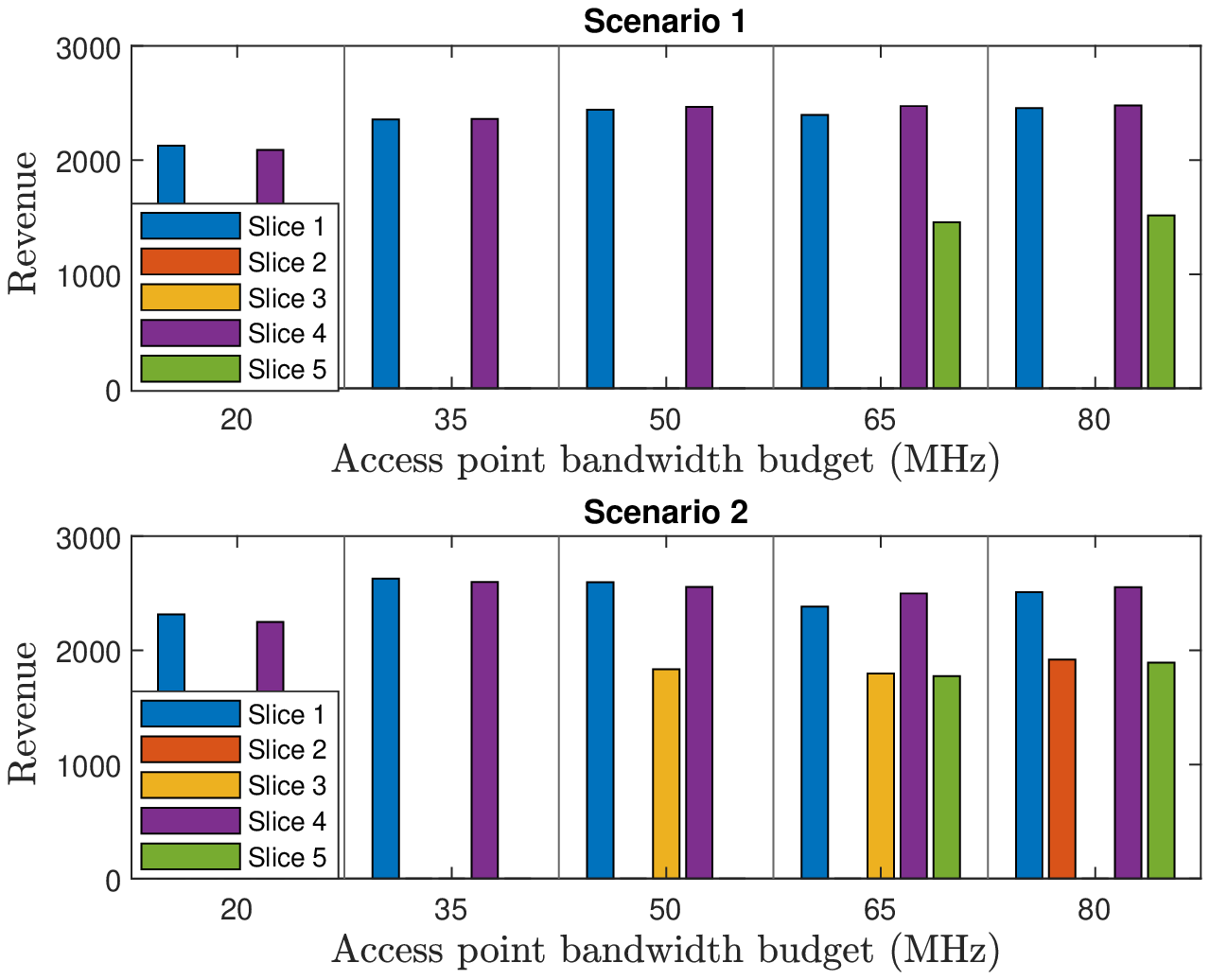}
			\caption{}
			\label{fig:act_rev}
		\end{subfigure}
		\caption{(a) The utility function without activation costs; (b) aggregate reserved rate for users of slices; (c) the expected outage of downlinks; and (d) the expected acquired revenue for slices by Algorithm \ref{al:parallel2}.}
	\end{figure*}
	
	\begin{proposition}
		The ADMM approach given in Algorithm \ref{al:admm} converges to the global  solution of \eqref{opt:slice2}.
	\end{proposition}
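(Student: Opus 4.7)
My plan is to reduce the claim to a two-block ADMM convergence analysis by (i) certifying that each primal update lands at the unique global minimizer of its partial augmented Lagrangian, and (ii) using the affine coupling $\mathbf{r}=\mathbf{g}$ to carry out a Lyapunov-style descent argument on the augmented Lagrangian $L$ in \eqref{eq:lag}. The non-convexity of \eqref{opt:slice2} is confined to the $\{\mathbf{g},\mathbf{t}\}$ block through the constraint \eqref{eq:out2}, so once the two ingredients above are in place the overall convergence follows by a standard descent argument adapted to a non-convex feasible set.

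\textbf{Block-wise global optimality.} For fixed $(\mathbf{g}^\pi,\boldsymbol{\tau}^\pi)$, the $\mathbf{r}$-subproblem \eqref{opt:subr} is strongly convex: $\Gamma(\mathbf{r})$ is convex by the arguments in Section \ref{sec:adap}, the augmented-Lagrangian quadratic is strongly convex, and the only active constraint \eqref{eq:linkcap} is affine. The smooth tight quadratic surrogate \eqref{eq:upper2} fulfils the four conditions of \cite[Assumption 2]{razaviyayn2013unified}, so Algorithm \ref{al:parallel} delivers the unique global minimizer, as noted at the end of Section \ref{sec:r}. For fixed $(\mathbf{r}^{\pi+1},\boldsymbol{\tau}^\pi)$, the $\{\mathbf{g},\mathbf{t}\}$-subproblem coincides with \eqref{opt:g}, which by Proposition \ref{prop:tir2} is solved to global optimality by Algorithm \ref{al:gt}.

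\textbf{Outer loop and main obstacle.} With both primal steps exactly optimal, I would establish sufficient decrease of $L$ by combining two standard estimates: each primal step contributes a non-positive change because it is the global minimizer of its block, while the dual update $\boldsymbol{\tau}^{\pi+1}=\boldsymbol{\tau}^\pi+\rho(\mathbf{r}^{\pi+1}-\mathbf{g}^{\pi+1})$ adds $\tfrac{1}{\rho}\|\boldsymbol{\tau}^{\pi+1}-\boldsymbol{\tau}^\pi\|_2^2$, which can be bounded via the Lipschitz gradient of the smooth surrogate $U(\mathbf{g}_s,\mathbf{t}_s)$. Taking $\rho$ large enough yields $L^{\pi+1}\leq L^\pi-c\,\|(\mathbf{r}^{\pi+1},\mathbf{g}^{\pi+1})-(\mathbf{r}^\pi,\mathbf{g}^\pi)\|_2^2$ for some $c>0$, from which boundedness of iterates and vanishing primal and dual residuals follow. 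The main obstacle is upgrading any limit point from a KKT point to a global minimizer of \eqref{opt:slice2} in the presence of the non-convex constraint \eqref{eq:out2}. I would resolve this by combining the strong convexity of the $\mathbf{r}$-block with the unique global solvability of the $\{\mathbf{g},\mathbf{t}\}$-block guaranteed by Propositions \ref{prop:tir1} and \ref{prop:tir2} (which in turn rests on the monotonicity of the \eqref{eq:out2} feasibility set in $t$); any limit point therefore simultaneously minimizes both partial augmented Lagrangians at the limiting dual variable and, since the residual $\mathbf{r}-\mathbf{g}$ vanishes, must be a global minimizer of \eqref{opt:slice2}.
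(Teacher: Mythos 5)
Your block-wise analysis coincides with the paper's: both arguments rest on (i) the $\mathbf{r}$-subproblem \eqref{opt:subr} being solved globally via the tight quadratic surrogate \eqref{eq:upper2} together with strong convexity, and (ii) Proposition \ref{prop:tir2} delivering the global solution of the $\{\mathbf{g},\mathbf{t}\}$-subproblem. Your Lyapunov-style sufficient-decrease argument for the outer loop is an addition the paper does not attempt --- the paper simply invokes a result of Bertsekas for the alternating scheme --- and it is a reasonable way to make the outer iteration rigorous, though you should note that bounding $\|\boldsymbol{\tau}^{\pi+1}-\boldsymbol{\tau}^{\pi}\|$ through the Lipschitz gradient of $U(\mathbf{g}_s,\mathbf{t}_s)$ presumes the last-updated block is effectively unconstrained, whereas the $\{\mathbf{g},\mathbf{t}\}$-update here carries \eqref{eq:nodecap}, \eqref{eq:slicereq1}, \eqref{eq:slicereq2} and, through the inner ADMM, \eqref{opt:g2}; the first-order identity you need for the dual bound then holds only up to normal-cone terms.

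The genuine gap is in your final step. From the facts that a limit point globally minimizes each partial augmented Lagrangian over its own block and that the residual $\mathbf{r}-\mathbf{g}$ vanishes, you conclude that the limit is a global minimizer of \eqref{opt:slice2}. That implication is the standard saddle-point argument and is valid when the problem is jointly convex, but here the feasible set is non-convex precisely because of \eqref{eq:out2}/\eqref{opt:g2}: block-wise global optimality does not imply joint global optimality when the constraint coupling the blocks is non-convex. A point can be the best possible $\mathbf{r}$ for the current $\{\mathbf{g},\mathbf{t}\}$ and the best possible $\{\mathbf{g},\mathbf{t}\}$ for the current $\mathbf{r}$ while a jointly better feasible point exists elsewhere. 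The paper's own proof is noticeably more modest than its statement on exactly this point: it concludes only that the limit is ``global stationary,'' i.e., a stationary point at which each block subproblem is solved globally, not a certified global minimizer of \eqref{opt:slice2}. To claim more than stationarity you would need an additional structural argument (for instance, that the non-convexity introduced by \eqref{opt:g2} cannot create spurious block-wise optima for the joint problem), which neither your sketch nor the paper supplies.
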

	
	\begin{proof}
	As the quadratic continuous upper-bound (42) has identical value and gradient to the value and gradient of $\Gamma(\mathbf{r})$ in point $\mathbf{r}=\hat{\mathbf{r}}$, the convergence conditions \cite[Assumption 2]{razaviyayn2013unified} are satisfied, and the successive upper-bound (42) minimizations converge to a stationary solution with respect to $\mathbf{r}$, which is also global due to the strong convexity of (41). Moreover, due to Proposition \ref{prop:tir2}, the global solution is obtained by Algorithm 5 for the subproblem with respect to $\{\mathbf{g},\mathbf{t}\}$. Therefore, the obtained solution by the ADMM approach in Algorithm 6 is global stationary \cite[p. 698]{bertsekas1999nonlinear}.
	\end{proof}

	\section{Simulation Results}
	\label{sec:sim}
	In this section, we evaluate the performance of the proposed approaches. The considered network for evaluations is the one given in \cite{Reyhanian1,reyhanian2021resource}, which includes both the backhaul and radio access components. A data center is connected to routers of the network through three gateway routers, GW $1$, GW $2$, and GW $3$. The network includes $57$ APs and $11$ network routers. APs are distributed on the X-Y plane, and they are connected to each other and to routers via wired links. The backhaul network has $162$ links. Wired link capacities are identical in both directions. 
	Backhaul link capacities are determined as
	\begin{itemize}
		\item Links between the data center and routers: $4$ Gnats/s;
		\item Links between routers: 2 Gnats/s;
		\item Links between routers and  APs: $2$ Gnats/s;
		\item 2-hop to the routers: $400$ Mnats/s;
		\item 3-hop to the routers: $320$ Mnats/s;
		\item 4-hop to the routers: $160$ Mnats/s.
	\end{itemize}
	The considered paths originate from the data center and are extended toward users. User-AP associations are determined by the highest long-term received power. We consider three wireless connections that have the highest received power to serve each user. There are three paths for carrying each user's data from a data center to APs. The distribution of the demand is log-normal:
	\begin{align}
		d_k \sim \frac{1}{d_k \sigma_k \sqrt{2\pi}}\exp(-\frac{(\ln d_k-\eta_k)^2}{2 \sigma_k^2}).\nonumber
	\end{align} 
	In addition, it is assumed that $\eta_k$ is realized randomly from a normal distribution for each user. The power allocations in APs are fixed. The dispensed resource in an AP is bandwidth. The channel between each user and an AP is a Rayleigh fading channel. The CDF of the wireless channel capacity, which is parameterized by the allocated bandwidth $t_k^p$, is given as \cite{4411539}
	\begin{align}
		1-\exp(\frac{1-2^{v_k^p/t_k^p}}{\overline{\text{SNR}_k^p}}),\nonumber
	\end{align}
	where $\overline{\text{SNR}_k^p}$ is the average SNR in one time-scale. The PDF of the wireless channel capacity is
	\begin{align}
		\frac{\ln(2)2^{v_k^p/t_k^p}\exp(\frac{1-2^{v_k^p/t_k^p}}{\overline{\text{SNR}_k^p}})}{\overline{\text{SNR}_k^p} t_k^p}.\nonumber
	\end{align}

	\begin{table*}
		\centering
		\scalebox{0.9}{
			\begin{tabular}{ |p{1.cm}|p{1cm}|p{0.8cm}|p{1cm}|p{0.5cm}|p{0.5cm}|p{1.4cm}|p{0.4cm}|p{1.2cm}|p{0.5cm}|p{0.5cm}|}
				\hline\multicolumn{1}{|c|}{AP Budget} &\multicolumn{5}{|c|}{Scenario $1$} &\multicolumn{5}{|c|}{Scenario $2$}   \\\hline
				\multicolumn{1}{|c|}{}&\multicolumn{2}{|c|}{Optimal solution} &\multicolumn{3}{|c|}{Algorithm \ref{al:parallel2}}&\multicolumn{2}{|c|}{Optimal solution} &\multicolumn{3}{|c|}{Algorithm \ref{al:parallel2}}\\\hline
				\multicolumn{1}{|c|}{}&\multicolumn{1}{|c|}{Activated slices}& \multicolumn{1}{|c|}{Obj. fun.}&\multicolumn{1}{|c|}{Activated slices}&\multicolumn{1}{|c|}{Obj. fun.}&\multicolumn{1}{|c|}{App. ratio}&\multicolumn{1}{|c|}{Activated slices}& \multicolumn{1}{|c|}{Obj. fun.}&\multicolumn{1}{|c|}{Activated slices}&\multicolumn{1}{|c|}{Obj. fun.}&\multicolumn{1}{|c|}{App. ratio}\\\hline
				20 MHz& 1, 4 & 1,446 & 1, 4 & 1,446 & 1 & 1, 4 & 1,655 & 1, 4 & 1,655 & 1\\\hline	
				35 MHz& 1, 4 & 2,055 & 1, 4 & 2,055  &  1& 1, 4 & 2,401 & 1, 4 & 2,401 & 1\\\hline	
				50 MHz& 1, 4, 5 & 2,358 & 1, 4 & 2,331 & 0.988 & 1, 4, 5 & 3,066 & 1, 3, 4 & 3,000 & 0.978\\\hline	
				65 MHz& 1, 4, 5 & 2,625 & 1, 4, 5 & 2,625 &  1 & 1, 2, 4 & 3,436 & 1, 3, 4, 5 & 3,386 & 0.985\\\hline	
				80 MHz& 1, 2, 4, 5 & 2,863 & 1, 4, 5 & 2,813 & 0.982 & 1, 2, 3, 4 & 3,852 &  1, 2, 4, 5 & 3,852 & 0.999\\\hline	
		\end{tabular}}
		\caption{The performance of Algorithm \ref{al:parallel2} against the optimal solution. }\label{tab:cpu2}
	\end{table*}

	Suppose that $200$ users are served by five different slices of a single tenant. We consider $\psi_j=1$ and $\theta_s=3$, and we increase the AP bandwidth budget from $20$ MHz to $80$ MHz with a step size of $15$ MHz. The minimum required reserved rate for the slices is $600$ Mnats/s. In the considered setup, $\phi_{k,L_n}(r_k)=90-\exp(-0.045r_k+4.5)$, where $r_k$ is in Mnats/s. The cost of activating a slice is $1000$ units of utility, and $q=0.1$ and $\epsilon=0.05$. We consider four possible $\mathcal{K}_s$ for each slice. Furthermore, we consider two scenarios. 
	We list the considered scenarios as follows:
	\begin{enumerate}
		\item Scenario 1: $\mathbb{E}[\eta_k]$ for users of slices $1$ and $4$ is $3.5$ Mnats/s, and $\mathbb{E}[\eta_k]$ for users of slices $2$, $3$, and $5$ is $2.5$ Mnats/s;
		\item Scenario 2: $\mathbb{E}[\eta_k]$ for users of slices $1$ and $4$ is $4$ Mnats/s, and $\mathbb{E}[\eta_k]$ for users of slices $2$, $3$, and $5$ is $3$ Mnats/s.
	\end{enumerate}
	In Fig. \ref{fig:act_obj}, we plot the gained utility by each slice given in \eqref{eq:util} without the cost of activation. When the depicted gained utility from a slice is zero in Fig. \ref{fig:act_obj}, that slice is not activated. It is observed that in both scenarios, the number of activated slices increases as the bandwidth budget of APs increases. The reason is that when the available bandwidth in APs increases, the expected outage of downlinks (Fig. \ref{fig:act_outage}) decreases, and therefore, the overall expected gained utilities of slices increase. Furthermore, we observe from Fig. \ref{fig:act_obj} that as users served by slices $1$ and $4$ request greater data rates, these slices, which provide higher revenue, are activated first. 
	
	
	From Fig. \ref{fig:act_rate}, we observe that the aggregate reserved rates for users of each slice increases as the bandwidth budget of APs increases if the number of active slices does not increase. When the number of active slices increases, the amount of allocated resources to users of a slice becomes smaller and reserved rates for users of a slice can decrease.  We observe from Fig. \ref{fig:act_outage} that the expected outage of users decreases as the bandwidth budget of APs increases. This helps the slices to improve their utility functions. The collected revenue of each slice from serving users is depicted in Fig. \ref{fig:act_rev}. We compare the solutions returned by Algorithm \ref{al:parallel2} against the optimal ones, obtained by an exhaustive search, in Table \ref{tab:cpu2}. According to Table \ref{tab:cpu2}, the obtained objective functions (including the activation costs) from Algorithm \ref{al:parallel2} are identical to their optimal values in six out of ten different experiments. The lowest approximation ratio of a solution obtained by Algorithm \ref{al:parallel2} is $0.978$, which clearly demonstrates the efficiency of Algorithm \ref{al:parallel2}. 
		\subsection{Slice Activation}
	\begin{figure*}[t!]
		\centering
		\begin{subfigure}[t]{0.35\textwidth}
			\includegraphics[width=\textwidth]{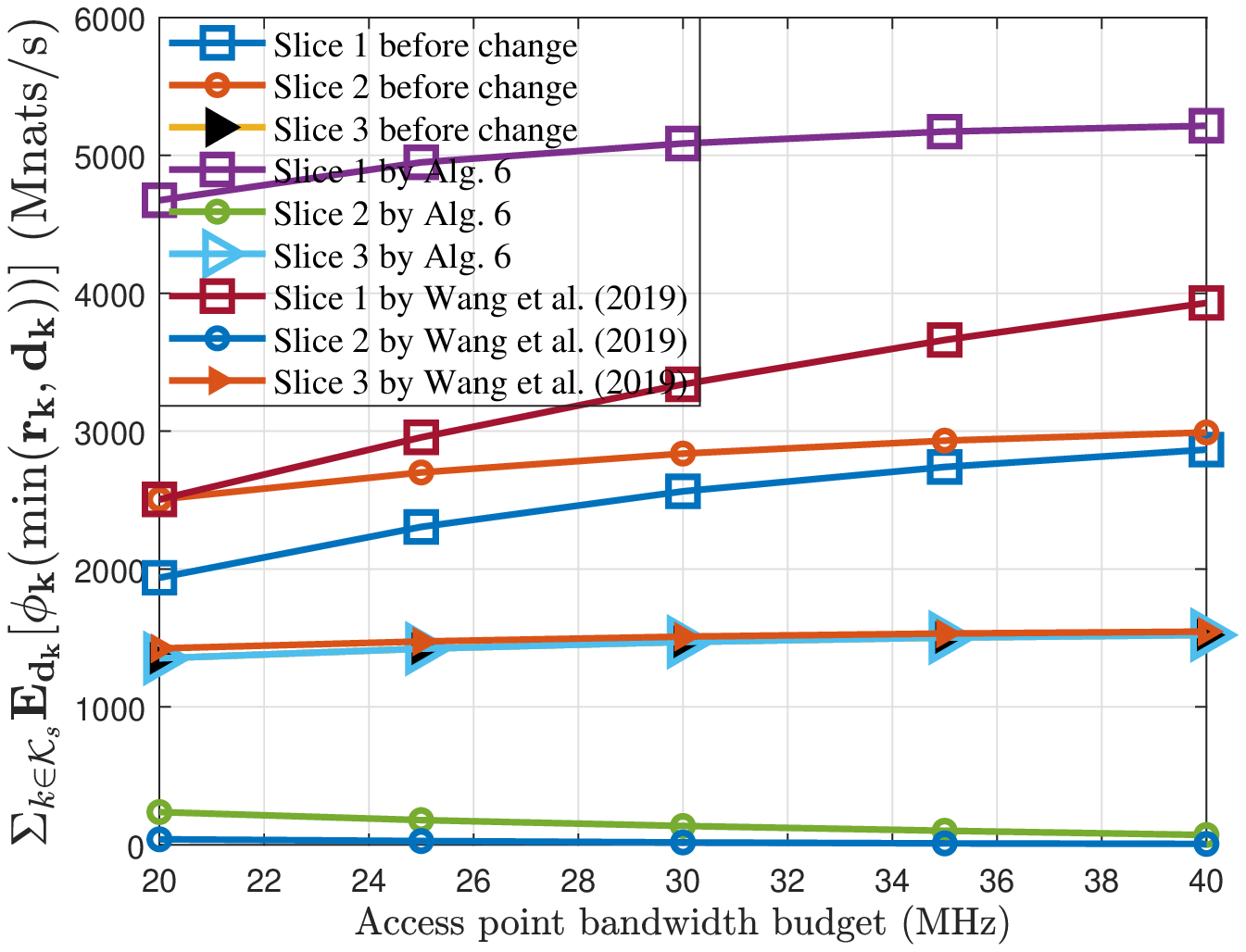}\caption{}\label{fig:chan_rev}
		\end{subfigure}
		\begin{subfigure}[t]{0.35\textwidth}
			\includegraphics[width=\textwidth]{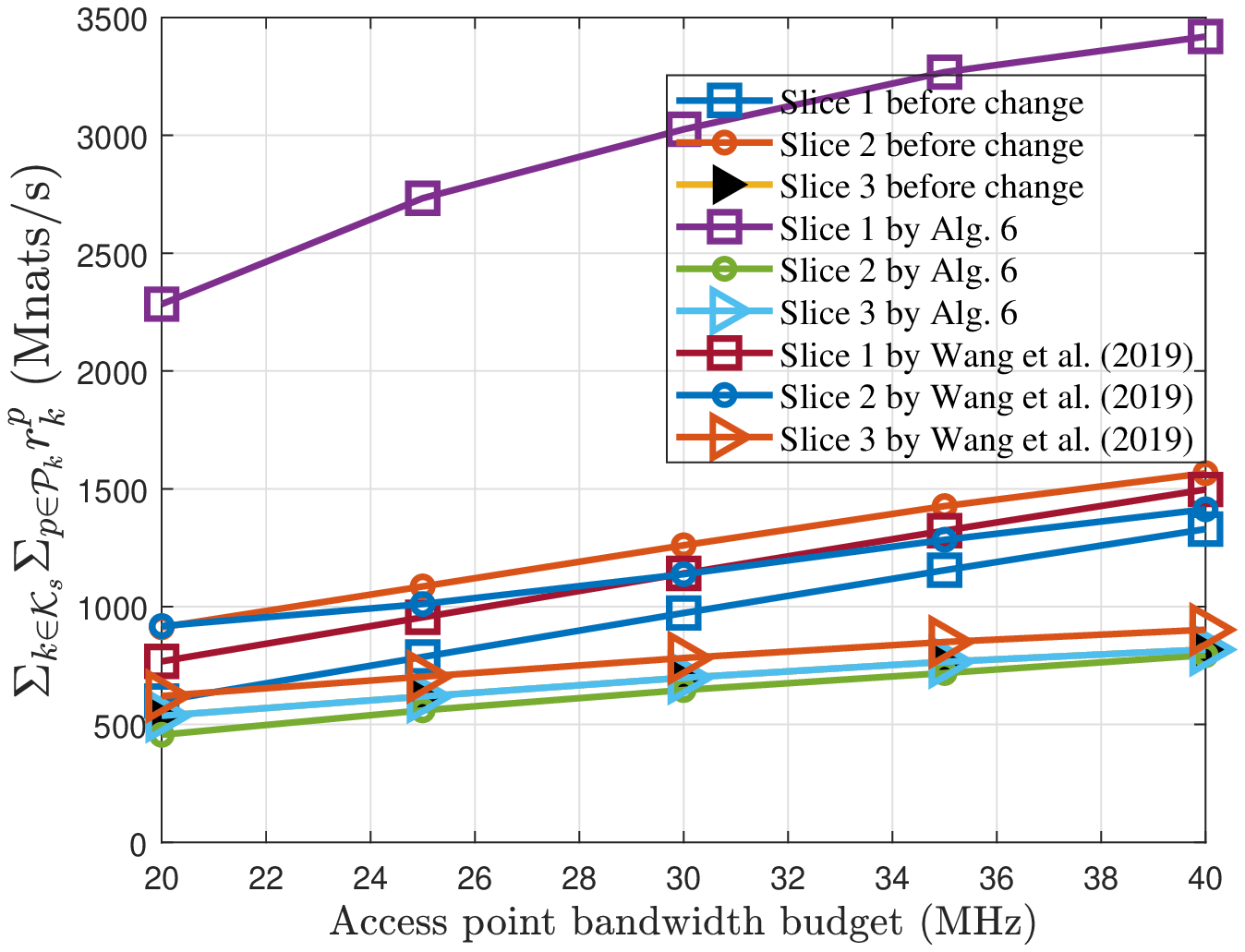}\caption{}\label{fig:chan_rate}
		\end{subfigure}\\
		\begin{subfigure}[t]{0.35\textwidth}
			\includegraphics[width=\textwidth]{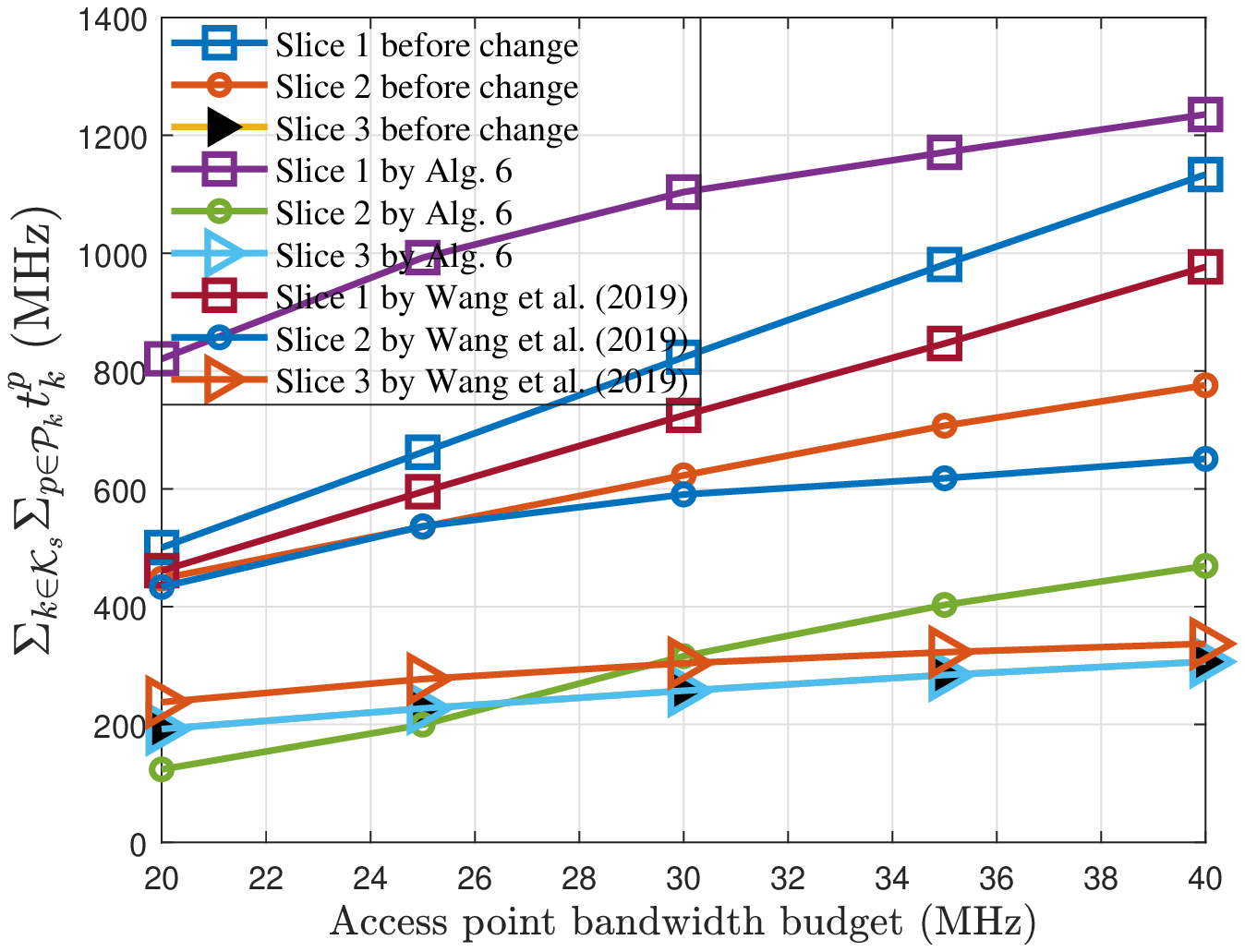}\caption{}\label{fig:chan_ban}
		\end{subfigure}
		\begin{subfigure}[t]{0.35\textwidth}
			\includegraphics[width=\textwidth]{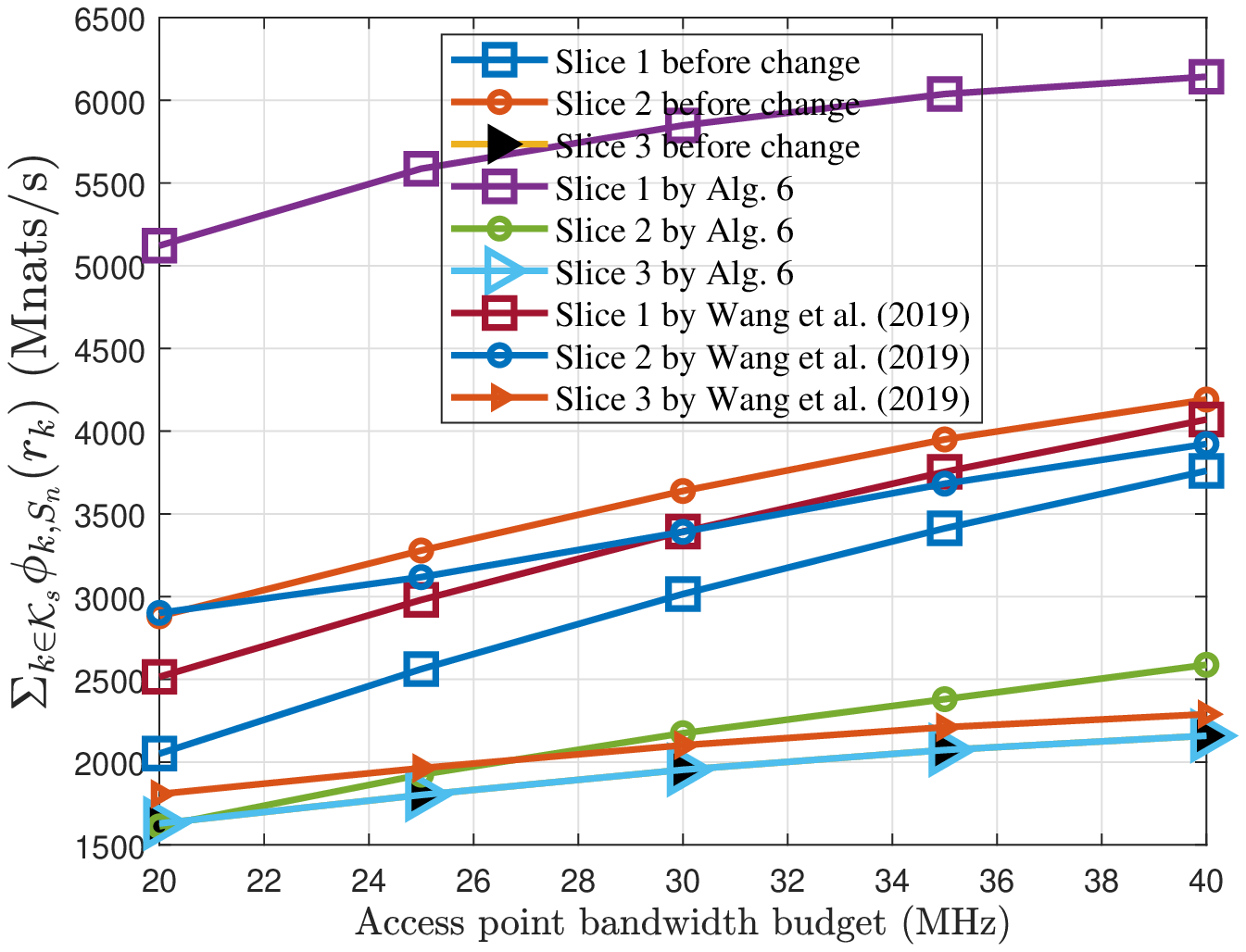}\caption{}\label{fig:max_rev}
		\end{subfigure}
		\caption{(a)  Expected gained revenue by slices; (b) reserved rates for slices; (c) reserved bandwidth for slices; (d) the maximum revenue that can be obtained for slices by Algorithm \ref{al:admm}.}
	\end{figure*}
	
	In the above experiments, the number of required iterations for the convergence of the Frank-Wolfe algorithm is at most $60$. Furthermore, Algorithm \ref{al:parallel2} converges after at most $30$ iterations. The CPU time for each test (which corresponds to one scenario and an AP budget) in our C implementation is less than $14$ seconds.
	\subsection{Slice Reconfiguration}
	We consider that $200$ users are served by three different tenants, where each owns a single slice. The first and second tenants serve $80$ users each. The third tenant serves 40 users. We consider $\psi_j=1$ and increase the AP bandwidth budget from $20$ MHz to $40$ MHz with a step size of $5$ MHz.The minimum required reserved rates for the slices are $500$ Mnats/s, $500$ Mnats/s and $400$ Mnats/s, respectively. In the considered setup, $\boldsymbol{\beta}=[0.1,0.2,0.3]$ and $\phi_{k,S_n}(r_k)=90-\exp(-0.09r_k+4.5)$, where $r_k$ is in Mnats/s. We consider $\mathbb{E}[\eta_k]=2.5$ Mnats/s for all users. Before the arrival of new statistics, we observe from Fig. \ref{fig:chan_rev} that the obtained revenue by slice $2$ is greater than that by slice $1$. The reason is that $\beta_2>\beta_1$. Thus, providing a certain rate to a user in slice $2$ requires less bandwidth. As a result, more rates are reserved for slice $2$, as it is depicted in Fig. \ref{fig:chan_rate}, and less bandwidth reserved for slice $2$, as depicted in Fig. \ref{fig:chan_ban}. We observe that the expected revenue for each slice enhances with the increase of bandwidth budget.  Consider that the traffic statistics of each user served by the first and the second slice change dramatically as follows:
	\begin{itemize}
		\item For the users served by the first slice, the mean of $\eta_k$ becomes $3.5$ Mnats/s. 
		\item The distribution of the demand for each user served by slice $2$ becomes exponential with a mean of $2$ Mnats/s. 
	\end{itemize} 
	The PDF of the demand for each user supported by the third slice remains identical. Furthermore, $\boldsymbol{\beta}$ changes to $\boldsymbol{\beta}=[0.1,0.6,0.3]$.
	As $\eta_k$ increases, the expected aggregate demanded traffic of users served by the first slice increases compared to its value in the first place. Furthermore, the expected aggregate demanded traffic of users served by the second slice diminishes compared to its previous value. Consider that the minimum reserved rate constraints of the first and second slices are relaxed after the change. However, the minimum reserved rate constraint of the third slice remains $400$ Mnats/s.  It is observed from Fig. \ref{fig:chan_rev} that the expected obtained revenue by the third slice does not change due to the enforced sparsity by the $\ell_0-$norm. 
	Numerical simulation confirms that after three updates, coefficients $a_s^{1,i}$ and $a_s^{2,i}$ converge. To solve each group LASSO subproblem, at most $15$ iterations of the outer ADMM in Algorithm \ref{al:admm} with $7$ iterations of the inner ADMM in Algorithm \ref{al:gt} are required. The overall CPU time is less than $27$ seconds in our C implementation.

	We compare Algorithm \ref{al:admm} against the one proposed in \cite{wang2019reconfiguration}. In \cite{wang2019reconfiguration}, $\ell_1$-norm is used to promote sparsity in slice reconfigurations. To tackle $\ell_1$-norm non-differentiability, Wang \textit{et al.} introduced affine constraints that allow limited slice variations instead of keeping $\ell_1$-norm in the objective function. We allow $1$ Mnats/s variation for the reserved rate for each path and $1$ MHz variation for the reserved bandwidth for each path. Under the same bandwidth budgets in APs, we observe from Figs. \ref{fig:chan_rev}, \ref{fig:chan_rate} and \ref{fig:chan_ban} that
	our approach is able to better reconfigure slices to achieve a higher expected revenue. Moreover, it reserves higher rates and more resources for users.  In Fig. \ref{fig:max_rev}, we depict the maximum revenue that can be obtained by slices. This takes place when the demanded rate by each user is at least equal to the reserved rate for that user. We observe that the maximum obtained revenue by slices using the reserved resources in the backhaul and RAN via Algorithm \ref{al:admm} is greater than that by \cite{wang2019reconfiguration}.

	\section{Concluding Remarks}\label{sec:con}
	In this paper, we studied jointly slicing link capacity in the backhaul and transmission resources in RAN for multiple network tenants prior to the observation of user demands. We proposed a novel two time-scale framework for the activation of network slices and also reconfiguring active slices based on the time varying statistics from user demands and channel states.
	We proposed $\ell_q$-norm, $0<q<1$, regularization to promote sparsity in the activation of network slices. Due to the non-convexity of the formulated activation problem, we successively solved a sequence of convex approximations of the problem via a novel Frank-Wolfe algorithm. Furthermore, we formulated the slice reconfiguration problem and since the reconfiguration of network slices can be costly, we used group LASSO regularization to enhance the sparsity of reconfigurations for active slices. An efficient, distributed and parallel algorithm is proposed to solve each group LASSO subproblem.
	Through extensive numerical tests, we verified the efficacy and efficiency of our approaches against the optimal solutions and existing state of the art method.

	\ifCLASSOPTIONcaptionsoff
	\newpage
	\fi

	\bibliographystyle{IEEEbib}
	\bibliography{ref_SDRA}

\end{document}